\newcommand{\inlineitem}[1][]{%
\ifnum\enit@type=\tw@
    {\descriptionlabel{#1}}
  \hspace{\labelsep}%
\else
  \ifnum\enit@type=\z@
       \refstepcounter{\@listctr}\fi
    \quad\@itemlabel\hspace{\labelsep}%
\fi}
\newcommand{\R}{\mathbb{R}}
\newcommand{\cl}[1]{\mathcal{#1}}
\newtheorem{remark}{Remark}
\newcommand{\Nxy}[2]{\left\|#1 \right\|_{#2}}
\newcommand{\nxy}[2]{\left|#1 \right|_{#2}}
\newcommand{\N}{\mathbb{N}}
\newcommand{\Cxy}[2]{\mathcal{C}(#1,#2)}
\newcommand{\LCxy}[2]{\mathcal{LC}(#1,#2)}
\newcommand{\Csxy}[2]{\mathcal{C}_s(#1,#2)}
\newcommand{\LCsxy}[2]{\mathcal{LC}_s(#1,#2)}
\renewcommand{\b}[1]{\mathbf{#1}}
\newcommand{\CLM}[1]{\b{\Psi}^{\b{#1}}}
\newcommand{\tCLM}[1]{\b{\tilde{\Psi}}^{\b{#1}}}
\newcommand{\tclm}[2]{\tilde{\Psi}^{#1}_{#2}}
\newcommand{\clm}[2]{\Psi^{#1}_{#2}}
\newcommand{\feas}{\mathbf{\overline{\Phi}}_{S_1}}
\newcommand{\TX}{\b{T}^{(x)}}
\newcommand{\TU}{\b{T}^{(u)}}
\newcommand{\Tx}[1]{{T}^{(x)}_{#1}}
\newcommand{\Tu}[1]{T^{(u)}_{#1}}
\newcommand{\WH}{\hat{\b{W}}}
\newcommand{\Wh}[1]{\hat{W}_{#1}}
\newcommand{\ddx}[1]{\frac{\partial}{\partial {#1} }}
\newcommand{\CLMofK}[2]{\b{\Phi}^{\b{#1}}(\b{F},#2)}
\newcommand{\SL}[2]{\mathrm{SL}[#1,#2]}
\theoremstyle{plain}
\newcommand{\lemref}[1]{Lemma \ref{#1}}
\newcommand{\thmref}[1]{Theorem \ref{#1}}
\newcommand{\corref}[1]{Corollary \ref{#1}}
\newcommand{\propref}[1]{Prop. \ref{#1}}
\newcommand{\secref}[1]{Section \ref{#1}}
\newcommand{\figref}[1]{Fig. \ref{#1}}
\newcommand{\defref}[1]{Def. \ref{#1}}
\newcommand{\sat}{\mathrm{sat}}
\theoremstyle{definition}
\newtheorem{coro}{Corollary}[section]
\newtheorem{thm}{Theorem}[section]
\newtheorem{lem}{Lemma}[section] 
\newtheorem{prop}{Proposition}[section]
\newtheorem{defn}[thm]{Definition} 
\newtheorem{rem}[coro]{Remark}
\title{\LARGE \bf
A System Level Approach to Discrete-Time Nonlinear Systems
}
\author{Dimitar Ho 
\thanks{Dimitar Ho is with the Department of Computing and Mathematical Sciences, California Institute of Technology, Pasadena, CA. 
        {\tt\small dho@caltech.edu}}%
}
\begin{document}
\maketitle
\thispagestyle{fancy}
\pagestyle{fancy}


\begin{abstract}
We will show that there is a universal connection between the achievable closed-loop dynamics and the corresponding feedback controller that produces it. This connection shows promise to lead to new methods for robust nonlinear control in discrete-time. We will show that,
given a causal nonlinear discrete-time system and controller, the resulting closed-loop is a solution to a nonlinear operator equation. Conversely, any causal solution to the nonlinear operator equation is a closed-loop that can be achieved by some causal controller. Moreover, solutions can be substituted into a simple dynamic controller structure, which we will refer to as a \textit{system level} controller, to obtain an implementation of the unique corresponding feedback controller. System level controllers could be an attractive approach for robust nonlinear control, as we will show that even when they are parametrized with approximate solutions to the operator equation, they can still produce robustly stable closed loops. We will provide theoretical results that state how grade of approximation and robust stability of the closed loop are related. 
Additionally, we will explore some first applications of our results. Using the cart-pole system as an illustrative example, we will derive how to design robust discrete-time trajectory tracking controllers for continuous-time nonlinear systems. Secondly, we will introduce a particular class of system level controller that shows to be particularly useful for linear systems with actuator saturation and state constraints; The special structure of the controller allows for simple stability and performance analysis of the closed-loop in presence of disturbances. The structure also offers simple ways to do anti-windup compensation, and provides a new nonlinear approach to the constrained LQG problem. A particular application to large-scale systems with actuator saturation and safety constraints is presented in our companion paper \cite{YuHo2020}. 
\end{abstract}

\section{Introduction}
\noindent Compared to linear control theory, there are fewer mathematical tools for tackling controller synthesis of general nonlinear systems. Nevertheless, with the recent explosion of available computational resources and progress in the optimization and control theory community, significant progress has been made towards achieving a more generalized, data-driven approach to nonlinear control design. With the sum-of-squares methods (SOS) \cite{SOSTOOLS}, \cite{papachristodoulou2005tutorial}, it became possible to compute Lyapunov functions for stability analysis through convex optimization. SOS-based controller synthesis methods are presented in \cite{SOSSynthesisPrajnaCT, PrajnaRantzerDensity2004} and \cite{SOSNonlinearWangDT} for the continuous-time (CT) and discrete-time (DT) settings, respectively. Examples of computational methods based on approximating solutions of Hamilton-Jacobi-Bellman type of equations are found in \cite{YPLeong2016},\cite{MatanyaDoyle2014},\cite{JohnHJB}. Other, more recent works \cite{OMLasserre2008} (CT), \cite{HanTedrake2018OccupationMeasures} (DT) provide alternative formulations of optimal controller synthesis through occupation measures.

Inspired by the recently developed system level approach to linear control theory \cite{slsacc}, we will present a new insight on nonlinear discrete-time systems that we believe could lead to entirely new synthesis methods for nonlinear discrete-time systems. The system level approach, as introduced in \cite{slsacc}, enabled new efficient controller synthesis methods \cite{ho2019scalable,SLSWang2018} that allow for localized, distributed and scalable control design in large-scale systems. This is achieved by transforming constrained optimal control problems as convex optimization problems over achievable closed-loop maps that can be solved efficiently. A key component of the system level synthesis (SLS) procedure is that once we have solved for the desired closed-loop map, there is a simple way to construct a controller that stably realizes this on the system.

In this paper we will show that this connection between closed-loop maps and their corresponding realizing controller is not a mere phenomenon of linear systems, but rather a surprisingly universal control principle that extends to general nonlinear discrete time systems. We will show that given a feasible nonlinear closed-loop map from disturbance to state and input, we can follow a procedure to construct an internally stable dynamic controller that realizes the given closed-loop maps. 
More specifically, we will characterize the space of all feasible closed-loop maps as solutions to a nonlinear operator equation and define a dynamic controller that realizes them. In particular, the realizing dynamic controller is obtained by parameterizing a simple controller structure with the solutions of the operator equation. Yet as it turns out, this controller structure, which we will refer to as system level (SL) controller, offers more benefits than its intended original purpose. In fact, we will show that we can parameterize an SL controller with approximate solutions of the operator equation and still obtain stabilizing feedback controllers, if the approximation error is small enough. 
To this end, we will discuss a simple sufficient stability condition based on the small-gain theorem.

The presented approach motivates new paths towards nonlinear control synthesis: 1, finding approximate solutions to the closed-loop operator equation and 2, Obtaining a stabilizing controller by parameterizing an SL controller with the approximate solutions. We will conclude this work by exploring two direct applications of this approach:
\begin{enumerate}
\item We apply the approach to the problem of trajectory tracking for nonlinear continuous-time systems through discrete-time zero-order hold feedback control. As a case study, we evaluate the SL controller on the cart-pole system and demonstrate that despite using only a rough model for synthesis, the resulting controller produces very robust closed loop performance. 
\item We will show that the presented framework gives a systematic way to "blend" multiple linear controllers into one stabilizing nonlinear controller. The resulting structure of the controller fits particularly well into the problem setting of linear systems with actuator saturation and provides new ways to do simple stability and performance analysis of the closed loop. In our companion paper \cite{YuHo2020}, we will show how this technique can be used to improve performance in large-scale linear systems that are subjected to actuator saturation, while guaranteeing that specified safety constraints are never violated. 
\end{enumerate}
We will begin with some notational conventions and mathematical preliminaries.

\section{Preliminaries and Notation}\label{sec:prelim}
\noindent We will define $\ell$ to be the space of real scalar sequences and $\ell^n$ to be the space of all sequences in $\R^n$ over the index set $\N$. Furthermore, sequences will be denoted by small bold letters $\b{x}:=(x_k)^{\infty}_{k=0}$ and occasionally we will define sequences explicitly with the tuple notation $\b{x} := (x_0,x_1,x_2,\dots) $. In addition, we will use the notation $x_{i:j}$ to refer to the truncation of the sequence to the tuple $(x_{i},x_{i+1},\dots, x_{j})$ for ($i <j$) and in reordered form $(x_{i},x_{i-1},\dots, x_{j})$ for ($i >j$).  Denote $\bm{\delta} = (1,0,0,\dots)$ as the scalar unit impulse sequence.\\
\noindent Furthermore, if not otherwise specified, we will use $| \cdot |$ to denote an arbitrary norm in $\R^{n}$ and for matrices $A$, $|A|$ refers to the corresponding induced norm of $A$. $\| \cdot \|$ will be reserved for norms on sequence spaces $\ell$ and $\ell^n$.
We will use the following definition of the norm $\|\cdot\|_{p}$ over vector sequences $\b{x} \in \ell^n$: 
\begin{align*}
&\|\b{x}\|_p := \left(\sum^{\infty}_{k=0}|x_k|^p\right)^{1/p} &\|\b{x}\|_{\infty} := \sup_{k\geq 0} |x_k|
\end{align*}
Correspondingly, define the vector sequence space $\ell^{n}_{p} \subset \ell^n$ as $\ell^{n}_{p} = \{\b{x} \in \ell^n| \|\b{x}\|_p < \infty \}$.

\subsection{Causal Operators}
\noindent Operators will be denoted in bold capital letters $\b{A}$ and will represent maps between vector sequence spaces $\ell^n \rightarrow \ell^m$. An operator will be called causal, if for any pair of input $\b{x} \in \ell^n$ and corresponding output $\b{y} = \b{A}(\b{x})$, the values of $y_t$ do not depend on future input values $x_{t+k}$, $k\geq 1$. More precisely, we will define $\b{A}:\ell^n \rightarrow \ell^m$ to be a \textit{causal} operator if there are functions, $A_t:\R^{n \times (t+1)} \rightarrow \R^m$ that allow $\b{A}$ to be equivalently represented as
\begin{align}\label{def:opdef}
\b{A}(\b{x}) = (A_0(x_0),A_1(x_1,x_0),\dots, A_t(x_{t:0}),\dots)
\end{align}
If in addition, the functions $A_t$ satisfy $A_t(x_{t:0}) = A_t(0,x_{t-1:0})$, i.e. are constant in their first parameter, then $\b{A}$ will be called \textit{strictly causal}.
The functions $\{A_t\}$ fully characterize a causal operator and will also be called \textit{component functions} of $\b{A}$. Notice that every component function $A_t$ has $t+1$ arguments which are populated in reverse-chronological order in Definition \eqref{def:opdef}. For notational convenience define $A_{i:j}: \R^{n \times (\max\{i,j\}+1)} \rightarrow \R^{k \times (|j-i|+1)}$ for $j\geq i$ as
\begin{align*}
A_{i:j}(x_{j:0}) &:= (A_i(x_{i:0}),A_{i+1}(x_{i+1:0}),\dots, A_j(x_{j:0}))
\end{align*}
and for $j<i$ as
\begin{align*}
A_{i:j}(x_{i:0}) &:= (A_i(x_{i:0}),A_{i-1}(x_{i-1:0}),\dots, A_j(x_{j:0})). 
\end{align*}
Define the space of all causal and strictly causal operators $\cl{X} \mapsto \cl{Y}$ as $\Cxy{\cl{X}}{\cl{Y}}$ and $\Csxy{\cl{X}}{\cl{Y}}$, respectively. Similarly, define $\LCxy{\cl{X}}{\cl{Y}}\subset \Cxy{\cl{X}}{\cl{Y}}$ and $\LCsxy{\cl{X}}{\cl{Y}}\subset \Csxy{\cl{X}}{\cl{Y}}$ the space of all linear causal and strictly causal operators.

\subsection{Addition and Multiplication of Operators}
\noindent Sums and products of operators are defined as binary operations on the space of causal operators
where
\begin{align*}
 \b{A}+\b{B}&:\b{x} \mapsto \b{A}(\b{x}) + \b{B}(\b{x})\\
 \b{A}\b{B}\text{ or }\b{A}(\b{B})&: \b{x} \mapsto \b{A}(\b{B}(\b{x})).
 \end{align*}
It is crucial to remember that for general operators, the above defined multiplication is not commutative and is \textbf{only} left-distributed over the summation but \textbf{not} right-distributed, i.e.:
\begin{align*}
 (\b{A}+\b{B})\b{C} = \b{A}\b{C}+\b{B}\b{C}\text{ \underline{but} }\b{C}(\b{A}+\b{B}) \neq \b{C}\b{A}+\b{C}\b{B}
\end{align*}
Moreover, for two operators $\b{A}\in \Cxy{\ell^n}{\ell^m}$, $\b{B}\in \Cxy{\ell^n}{\ell^h}$ with matching domain, $(\b{A},\b{B}) \in\Cxy{\ell^n}{\ell^m \times \ell^h}$ will refer to the operator $(\b{A},\b{B}):\b{x} \mapsto (\b{A}(\b{x}),\b{B}(\b{x}))$ and with slight abuse of notation, we will also use the notation $\b{A} = (\b{A}^1,\b{A}^2)$ to define $\b{A}^1 \in\Cxy{\ell^n}{\ell^m}$ and $\b{A}^2\in\Cxy{\ell^n}{\ell^h}$ as the partial maps of $\b{A}\in\Cxy{\ell^n}{\ell^m \times \ell^h}$. 
\section{closed-loop Maps and Realizing Controllers}
\noindent This section will focus on introducing the notion of closed-loop maps as causal operators w.r.t. general discrete-time nonlinear systems with additive disturbances. Moreover, we will derive necessary and sufficient conditions for operators to be closed-loop maps and how they can be realized by a dynamic controller.\\

\noindent Define the discrete-time nonlinear closed-loop $S_1$ as the dynamical system described by the following equations:
\begin{subequations}
\label{eq:syscl-0}
\begin{align} \label{eq:sys-0}
S_1:&& x_{t} &= f_t(x_{t-1:0},u_{t-1:0}) + w_t, \quad x_0 = w_0\\
\label{eq:u}&& u_t &= K_t(x_{t:0})
\end{align}
\end{subequations}
with $x_t\in \mathbb{R}^n$, $u_t\in \mathbb{R}^m$, $w_t\in \mathbb{R}^n$ being the state, input and disturbance of the system. The functions $f_t:\R^{n \times t} \times \R^{m \times t} \rightarrow \R^n$ characterize the open-loop system behavior and $K_t:\R^{n \times t+1}  \rightarrow \R^m$ represents some causal feedback control applied to the system. Alternatively, we can express the dynamics of $S_1$ as a nonlinear equation in signal space. First, group together the functions $f_t$ and $K_t$ into the strictly causal operator $\b{F}: \ell^n \times \ell^m \rightarrow \ell^n $ and causal operator $\b{K}:\ell^n \rightarrow \ell^m$ as
\begin{align}
\label{eq:defF}\b{F}(\b{x},\b{u}) &:= (0, f_1(x_0,u_0),\dots, f_t(x_{t-1:0},u_{t-1:0}),\dots ) \\
\b{K}(\b{x}) &:= (K_0(x_0), \dots, K_t(x_{t:0}),\dots ).
\end{align} 
Now, we can equivalently define the closed-loop $S_1$ as all trajectories $\b{x}\in\ell^n$, $\b{u}\in\ell^m$ and $\b{w}\in\ell^n$ that are solutions to the nonlinear equation:
\begin{align}\label{eq:syscl-op}
&S_1: & \b{x} = \b{F}(\b{x},\b{u}) + \b{w}, \quad \b{K}(\b{x})=\b{u}
\end{align}
We will use $\b{F}$ and $\b{K}$ to refer to the plant \eqref{eq:sys-0} and controller \eqref{eq:u} of the closed-loop $S_1$ and will refer to \eqref{eq:syscl-op} as the \textit{operator form} of the closed-loop.\\

\noindent From the equations \eqref{eq:sys-0} it is clear that for any disturbance sequence $\b{w}$, the closed-loop $S_1$ produces unique trajectories $\b{x}$ and $\b{u}$. Therefore, the closed-loop $S_1$ induces a well-defined mapping between disturbances $\b{w}$ and state/input trajectories $(\b{x},\b{u})$ and it is clear that the mapping is causal. We will define the map $\b{w} \mapsto (\b{x},\b{u})$ as the closed-loop map $\b{\Phi}_{S_1}[\b{F},\b{K}]$:
\begin{defn}[Closed-Loop Maps]\label{def:clm}
Define $\b{\Phi}_{S_1}[\b{F},\b{K}] \in \Cxy{\ell^n}{\ell^n\times \ell^m}$ as the unique operator that satisfies for all trajectories $\b{w}$, $\b{x}$, $\b{u}$ of $S_1$, the relationship $\b{\Phi}_{S_1}[\b{F},\b{K}](\b{w}) = (\b{x},\b{u})$. $\b{\Phi}_{S_1}[\b{F},\b{K}]$ will be called the closed-loop map of the plant $\b{F}$ w.r.t $\b{K}$. Moreover we will refer to the partial maps $\b{w} \rightarrow \b{x}$ and $\b{w} \rightarrow \b{u}$ with $\b{\Phi}^{\b{x}}_{S_1}[\b{F},\b{K}]$ and $\b{\Phi}^{\b{u}}_{S_1}[\b{F},\b{K}]$, respectively. 
\end{defn}
\noindent Going off of the previous definition, if we leave $\b{K}$ unspecified, we will call $\CLM{} = (\CLM{x},\CLM{u}) \in \Cxy{\ell^n}{\ell^n\times\ell^m}$ a closed-loop map (CLM) of $\b{F}$ if there exists a so-called \textit{realizing} controller $\b{K}'$ such that $\CLM{} = \b{\Phi}_{S_1}[\b{F},\b{K'}]$. Moreover, we will define the set of all such maps $\CLM{}$, the space of all realizable CLMs $\feas[\b{F}]$:
\begin{defn}[Space of Realizable CLMs]
Given a plant $\b{F}$, the space of all realizable closed-loop maps $\feas[\b{F}] \subset \Cxy{\ell^n}{\ell^n \times \ell^m}$ is defined as:
\begin{align*}
\feas[\b{F}]:= \{\b{\Psi}| \exists \text{ causal }\b{K}\text{ s.t. }\b{\Psi}= \b{\Phi}_{S_1}[\b{F},\b{K}] \}
\end{align*}
\end{defn}
\noindent A main result of this paper is the following characterization of the space $\feas[\b{F}]$:
\begin{thm}\label{thm:sufnec}
 $\CLM{} = (\CLM{x}, \CLM{u}) \in \feas[\b{F}]$ if and only if $\CLM{}$ satisfies the operator equation
\begin{align}\label{eq:opfunceq_red}
\CLM{x} = \b{F}(\CLM{}) + \b{I}
\end{align}
Moreover $\b{K'} = \CLM{u}(\CLM{x})^{-1}$ is the unique realizing controller.
\end{thm}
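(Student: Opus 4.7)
The plan is to prove the two directions separately, and then uniqueness. For necessity, I would take any $\CLM{} = \b{\Phi}_{S_1}[\b{F},\b{K}]$ and substitute $\b{x} = \CLM{x}(\b{w})$, $\b{u} = \CLM{u}(\b{w})$ into the plant equation $\b{x} = \b{F}(\b{x},\b{u}) + \b{w}$ appearing in \eqref{eq:syscl-op}. Since this must hold as an identity in $\b{w}$, it yields $\CLM{x} = \b{F}(\CLM{x},\CLM{u}) + \b{I} = \b{F}(\CLM{}) + \b{I}$, which is exactly \eqref{eq:opfunceq_red}.

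For sufficiency, the idea is to define $\b{K}' := \CLM{u} \circ (\CLM{x})^{-1}$, so the key step is to show that $\CLM{x}$ is causally invertible. Here I would use that by construction $\b{F}$ is strictly causal, so $\b{G} := \b{F}(\CLM{})$ is a strictly causal operator, and \eqref{eq:opfunceq_red} reads $\CLM{x} = \b{G} + \b{I}$. Writing out component functions, this gives $(\CLM{x}(\b{w}))_0 = w_0$ and $(\CLM{x}(\b{w}))_t = G_t(w_{t-1:0}) + w_t$ for $t \geq 1$. Given any target $\b{x}$, the recursion
\begin{align*}
w_0 &= x_0, \qquad w_t = x_t - G_t(w_{t-1:0}) \quad (t \geq 1)
\end{align*}
uniquely determines $\b{w}$ from $\b{x}$ in a causal manner, and by construction $\CLM{x}(\b{w}) = \b{x}$. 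This yields a well-defined causal inverse $(\CLM{x})^{-1} \in \Cxy{\ell^n}{\ell^n}$, from which $\b{K}' \in \Cxy{\ell^n}{\ell^m}$ is causal as the composition of two causal operators.

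It remains to verify $\CLM{} = \b{\Phi}_{S_1}[\b{F},\b{K}']$. Fix any $\b{w}$ and let $\b{x} = \CLM{x}(\b{w})$, $\b{u} = \CLM{u}(\b{w})$. The plant equation $\b{x} = \b{F}(\b{x},\b{u}) + \b{w}$ is exactly \eqref{eq:opfunceq_red} evaluated at $\b{w}$, and the controller equation $\b{u} = \b{K}'(\b{x})$ follows from $\b{K}'(\b{x}) = \CLM{u}((\CLM{x})^{-1}(\CLM{x}(\b{w}))) = \CLM{u}(\b{w}) = \b{u}$. Hence $(\b{x},\b{u},\b{w})$ solves $S_1$ with $\b{K}'$, and by uniqueness of the trajectories of $S_1$ we conclude $\b{\Phi}_{S_1}[\b{F},\b{K}'] = \CLM{}$.

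Finally, uniqueness of the realizing controller is immediate: any other realizing $\b{K}''$ must satisfy $\b{K}''(\CLM{x}(\b{w})) = \CLM{u}(\b{w})$ for every $\b{w}$, i.e., $\b{K}'' \circ \CLM{x} = \CLM{u}$, and composing with $(\CLM{x})^{-1}$ on the right gives $\b{K}'' = \b{K}'$. The main obstacle I anticipate is the causal-invertibility step: one must be careful that the recursion for $(\CLM{x})^{-1}$ is well-defined even without continuity or regularity assumptions on $f_t$, which is why strict causality of $\b{F}$ (guaranteeing $G_t$ depends only on $w_{t-1:0}$) is essential, and why the solvability reduces to pure algebraic recursion rather than a fixed-point argument.
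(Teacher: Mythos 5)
Your proof is correct and matches the paper's architecture: necessity by substitution into the plant equation, sufficiency through causal invertibility of $\CLM{x}$ followed by verification that $\b{K}' = \CLM{u}(\CLM{x})^{-1}$ realizes $\CLM{}$, and uniqueness by cancelling $\CLM{x}$ on the right in $\b{K}''\CLM{x} = \CLM{u} = \b{K}'\CLM{x}$. Your inversion recursion $w_0 = x_0$, $w_t = x_t - G_t(w_{t-1:0})$ reproves inline what the paper isolates as \propref{prop:invab} and simply cites; the content is identical, including your (correct) observation that strict causality of $\b{F}$, not any continuity of the $f_t$, is what makes the recursion well posed. The only genuine divergence is the final verification step. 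The paper takes an arbitrary response $(\b{x},\b{u}) = \b{\Phi}_{S_1}[\b{F},\b{K}'](\b{w})$ and derives $\b{x} = \CLM{x}(\b{w})$ by operator algebra, expanding $\b{x}$ via the identity $(\CLM{x}-\b{I})(\CLM{x})^{-1} + (\CLM{x})^{-1} = \b{I}$ (valid because composition is left-distributive over sums) and then cancelling the residual $\CLM{x}-\b{I}-\b{F}\CLM{} = \b{0}$. You instead exhibit $(\CLM{x}(\b{w}),\CLM{u}(\b{w}))$ as a solution of the closed-loop equations under $\b{K}'$ and appeal to uniqueness of the trajectories of $S_1$ for fixed $\b{w}$. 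Your route is marginally more elementary: it sidesteps the noncommutative manipulation, where one must respect the failure of right-distributivity, at the price of explicitly leaning on well-posedness of $S_1$ --- a fact the paper asserts just before \defref{def:clm} and which both proofs need anyway for $\b{\Phi}_{S_1}[\b{F},\b{K}']$ to be well defined. Both arguments are rigorous; the difference is one of bookkeeping rather than substance.
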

\begin{proof}
See appendix.
\end{proof}
\noindent As it turns out, the space of realizable CLMs $\feas[\b{F}]$ can be precisely characterized as solutions to the nonlinear operator equation \eqref{eq:opfunceq_red}. We will therefore also refer to \eqref{eq:opfunceq_red} as the \textit{CLM equation}.
Writing out the CLM equation \eqref{eq:opfunceq_red} in terms of component functions gives the more explicit condition on the functions $\clm{x}{t}$, $\clm{u}{t}$: The map $\CLM{}=(\CLM{x},\CLM{u})$ satisfies \eqref{eq:opfunceq_red} if and only if its component functions satisfy the following infinite set of function equations for all inputs $w_{t:0}$:
\begin{align}\label{eq:opfunceq_red_fun}
\clm{x}{t}(w_{t:0}) = f_t(\clm{x}{t-1}(w_{t-1:0}),\clm{u}{t-1}(w_{t-1:0})) + w_t
\end{align}
Moreover, \thmref{thm:sufnec} states that the mapping between CLMs $(\CLM{x}, \CLM{u})\in \feas[\b{F}]$ and the corresponding realizing controllers $\b{K}'$ is one-to-one, via the relationship $\b{K'} = \CLM{u}(\CLM{x})^{-1}$. A crucial step of establishing \thmref{thm:sufnec}, is to show that $(\CLM{x})^{-1}$ always exists. This partial result follows from the following important fact, which will be used frequently:
 \begin{prop}\label{prop:invab}
 If $(\b{A}-\b{I}) \in \Csxy{\ell^n}{\ell^n}$ then $\b{A}^{-1} \in \Cxy{\ell^n}{\ell^n}$ exists and $\b{b} = \b{A}^{-1}(\b{a})$ satisfies
$$ b_t = a_t-A_t(0,b_{t-1:0})$$
 \end{prop}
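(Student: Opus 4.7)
The plan is to exploit strict causality of $\b{A}-\b{I}$ to rewrite each component function $A_t$ in a form that is affine in its first argument and then solve the resulting scalar-in-$b_t$ equation by recursion.

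First I would unpack what strict causality of $\b{A}-\b{I}$ means at the level of component functions. Since $\b{I}$ has component functions $I_t(x_{t:0}) = x_t$, the component functions of $\b{A}-\b{I}$ are $(A-I)_t(x_{t:0}) = A_t(x_{t:0}) - x_t$. The definition of strict causality forces $(A-I)_t(x_{t:0}) = (A-I)_t(0,x_{t-1:0})$, which, after substituting $x_t = 0$ on the right, yields the crucial decomposition
\begin{equation*}
A_t(x_{t:0}) = x_t + A_t(0, x_{t-1:0}).
\end{equation*}
So the last argument of $A_t$ enters purely additively, and everything nontrivial lives in the strictly-past-dependent term $A_t(0,\,\cdot\,)$.

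Next I would construct the inverse by induction on $t$. Given $\b{a} \in \ell^n$, set $b_0 := a_0 - A_0(0)$ and, assuming $b_{0},\dots,b_{t-1}$ have been defined, set
\begin{equation*}
b_t := a_t - A_t(0, b_{t-1:0}).
\end{equation*}
This defines $\b{b}$ uniquely. Plugging back into the decomposition above gives $A_t(b_{t:0}) = b_t + A_t(0, b_{t-1:0}) = a_t$, so $\b{A}(\b{b}) = \b{a}$. Uniqueness is immediate from the same induction: any $\b{b}'$ with $\b{A}(\b{b}') = \b{a}$ must satisfy $b'_t = a_t - A_t(0, b'_{t-1:0})$, and an easy induction using the matching initial and recursive formulas forces $b'_t = b_t$ for all $t$. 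Hence the map $\b{A}^{-1}:\b{a} \mapsto \b{b}$ is well defined and is a genuine two-sided inverse.

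Finally I would verify that $\b{A}^{-1}$ belongs to $\Cxy{\ell^n}{\ell^n}$. The recursion shows that $b_t$ depends only on $a_{t:0}$ (via $a_t$ directly and via $b_{t-1:0}$, which by induction depends only on $a_{t-1:0}$), so reading off the component functions of $\b{A}^{-1}$ from the recursion gives causal functions in the sense of \eqref{def:opdef}. There is no real obstacle here; the whole proof is essentially a one-line observation (affine-in-last-argument) followed by a straightforward induction. The only minor care required is bookkeeping of the reverse-chronological indexing convention used in the definition of component functions, so that the identification $A_t(x_{t:0}) = x_t + A_t(0,x_{t-1:0})$ is not misread.
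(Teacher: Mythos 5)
Your proposal is correct and follows essentially the same route as the paper: it uses strict causality of $\b{A}-\b{I}$ to obtain the additive decomposition $A_t(x_{t:0}) = x_t + A_t(0,x_{t-1:0})$, equivalent to the paper's rewriting $\b{b} = \b{a} - (\b{A}-\b{I})(\b{b})$, and then reads the component form $b_t = a_t - A_t(0,b_{t-1:0})$ as a recursion establishing existence and uniqueness. Your version merely spells out the inductive uniqueness argument and the causality check of $\b{A}^{-1}$ more explicitly than the paper does, which is a harmless (and arguably welcome) elaboration of the same proof.
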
 
 \begin{proof}
 Assume given $\b{a}$, we want to find $\b{b}$ s.t. $\b{A}(\b{b}) = \b{a}$. Equivalently, write 
\begin{align}\label{eq:invab}
\b{b} = \b{a} - (\b{A}-\b{I})(\b{b}).
\end{align}
Now, since $\b{A} -\b{I}$ is strictly causal, the component function $A_{t}$ satisfies $A_{t}(x_t,x_{t-1:0}) = A_t(0,x_{t-1:0}) + x_t$. Using this factorization, the component form of \eqref{eq:invab} becomes
\begin{align}\label{eq:invab-comp}
b_t = a_t-A_{t}(0,b_{t-1:0})
\end{align}
and proves existence and uniqueness of $\b{b}$ as it describes a concrete recursive procedure of its computation.
 \end{proof}
  The above proposition explains the guaranteed existence of $(\CLM{x})^{-1}$ in \thmref{thm:sufnec}: The partial map $\CLM{x}$ of a CLM $\CLM{}$ satisfies $\CLM{x} -\b{I}=\b{F}(\CLM{x},\CLM{u})$ due to \eqref{eq:opfunceq_red}. Since $\b{F} \in \Csxy{\ell^n\times \ell^m}{\ell^n}$, we know $\CLM{x} -\b{I} \in \Csxy{\ell^n}{\ell^n}$ and \propref{prop:invab} applies.
 
\subsection{System Level Implementation of Realizing Controllers} 
\noindent Furthermore, the previous \propref{prop:invab} shows us a concrete way to implement the realizing controller $\b{K}'=\CLM{u}(\CLM{x})^{-1}$ of \thmref{thm:sufnec}: Given an input $\b{a}$, the output $\b{b} = \b{K}'(\b{a})$ can be computed recursively through the equations
\begin{subequations}
\label{eq:SL-1}
\begin{align}
c_t &= a_t - \clm{x}{t}(0,c_{t-1:0})\\
b_t &= \clm{u}{t}(c_{t:0}).
\end{align}
\end{subequations}
The above implementation represents a dynamical system with input $\b{a}$, output $\b{b}$ and internal state $\b{c}$ and will be referred to as the \textit{system level implementation} of $\b{K}'$. Moreover, in later sections we will make use of this implementation to define controllers $\b{K}$ that are parametrized by operators $\CLM{x}$, $\CLM{u}$ that are not necessarily CLMs. In particular, the next section will show that such an implementation can give closed-loop stability if $\CLM{}$ is satisfying \eqref{eq:opfunceq_red} approximately. Therefore we will define such structured controllers separately as \textit{System Level} (SL)-controllers:

\begin{defn}\label{def:SL}
Assume given operators $\b{A}\in\Cxy{\ell^n}{\ell^n}$, $\b{B} \in\Cxy{\ell^n}{\ell^m}$, where $\b{A}-\b{I} \in \Csxy{\ell^n}{\ell^n}$.
Consider the dynamical system with input $\b{a}$, output $\b{b}$ and internal state $\b{c}$ according to the equations
\begin{align}
c_t &= a_t - A_t(0,c_{t-1:0})\\
b_t &= B_t(c_{t:0}).
\end{align}
The above dynamical system will be referred to as the system level controller $\SL{\b{A}}{\b{B}}$. 
\end{defn}

\noindent A simple yet important consequence of the definition \defref{def:SL} is that both input $\b{a}$ and output $\b{b}$ can always be expressed through the internal state $\b{c}$ as $\b{a} = \b{A}\b{c}$, $\b{b} = \b{B}\b{c}$.
\section{Robust Stability of closed-loop}\label{sec:robstab}
\noindent As shown in the previous section, any closed-loop map $\CLM{}=(\CLM{x},\CLM{u}) \in \feas[\b{F}]$ can be realized with the corresponding system level controller $\SL{\CLM{x}}{\CLM{u}}$ as defined in \defref{def:SL}. Thus, we know that if we choose $\b{K} = \SL{\CLM{x}}{\CLM{u}}$, then for any disturbance $\b{w}$, the trajectories $(\b{x},\b{u})$ of the closed-loop $S_1$ will be $(\b{x},\b{u}) = \CLM{}(\b{w})$. In this section, we will show that under mild assumptions, the controller $\b{K} = \SL{\CLM{x}}{\CLM{u}}$ guarantees internal stability of the closed-loop. Moreover, we will show that closed-loop stability is guaranteed even if $\CLM{}$ is satisfying the CLM equation \eqref{eq:opfunceq_red} only approximately. 

\noindent To setup the stability analysis we will take the original closed-loop \eqref{eq:sys-0} with $\b{K}$ chosen to be $\SL{\CLM{x}}{\CLM{u}}$ and add additional perturbation signals $\b{v}$ and $\b{d}$ to the internal state of the system level controller and control input. We will call the new perturbed closed-loop $S'_1$:
\begin{subequations}
\label{eq:syscl-2}
\begin{align} 
S'_1:&& x_{t} &= f_t(x_{t-1:0},u_{t-1:0}) + w_t, \quad x_0 = w_0\\
&& \hat{w}_t &= x_t + v_t - \clm{x}{t}(0,\hat{w}_{t-1:0}) \\
&& u_t &= \clm{u}{t}(\hat{w}_{t:0}) + d_t.
\end{align}
\end{subequations}
As before, $w$, $x$ and $u$ represent system disturbance, state and input, and the added state $\hat{w}$ represents the internal state of the system level controller. Furthermore, in line with the definition \defref{def:SL} we will assume $\CLM{x}-\b{I} \in \Csxy{\ell^n}{\ell^n}$, yet we will for now not restrict the maps $\CLM{}$ to lie in the space of feasible CLMs $\feas[\b{F}]$. With $\b{F}$ defined as in \eqref{eq:defF}, $S'_1$ can be equivalently written in operator form as
\begin{subequations}
\label{eq:syscl-2-op}
\begin{align} 
\label{eq:syscl-2-op-x}S'_1:&& \b{x} &= \b{F}(\b{x},\b{u}) + \b{w} \\
\label{eq:syscl-2-op-w}&& \b{\hat{w}} &= \b{x} + \b{v} - (\CLM{x}-\b{I})(\b{\hat{w}}) \\
\label{eq:syscl-2-op-u}&& \b{u} &= \CLM{u}(\b{\hat{w}}) + \b{d}
\end{align}
\end{subequations}
 Analogously to \defref{def:clm}, define $$\b{\Phi}_{S'_1}[\b{F},\CLM{}]: \ell^n \times \ell^m \times \ell^n \mapsto \ell^n \times \ell^m \times \ell^n$$ to refer to the causal mapping $\bm{\delta}:= (\b{w},\b{d},\b{v}) \mapsto (\b{x},\b{u}, \b{\hat{w}})$ between the perturbation signals $\bm{\delta}$ and system/controller states of the closed-loop $S'_1$. Accordingly, define the partial maps $\b{\Phi}^x_{S'_1}[\b{F},\CLM{}]$, $\b{\Phi}^{\hat{w}}_{S'_1}[\b{F},\CLM{}]$ and $\b{\Phi}^{u}_{S'_1}[\b{F},\CLM{}]$.
 
We will analyze stability of the closed-loop $S'_1$ by defining a nominal perturbation $\bm{\delta}^*$ with corresponding nominal response $(\b{x}^*,\b{u}^*, \b{\hat{w}}^*)=\b{\Phi}_{S'_1}(\bm{\delta}^*)$ of $S'_1$ and then investigating how much the closed-loop response $(\b{x},\b{u}, \b{\hat{w}})=\b{\Phi}_{S'_1}(\bm{\delta})$ changes if $\bm{\delta}$ deviates from $\bm{\delta}^*$. We will then call $S_1$ to be $\ell_p$-stable at $\bm{\delta}^*$, if $\|\b{\Phi}_{S'_1}(\bm{\delta})-\b{\Phi}_{S'_1}(\bm{\delta}^*)\|_p$ is small for perturbations $\bm{\delta}$ close to $\bm{\delta}^*$ in the $\ell_p$-sense. More formally, we will use the following notion of $\ell_p$-stability for operators in the statement of our results:

 \begin{defn}\label{def:stablp-1}
An operator $\b{A}\in \Cxy{\ell^n}{\ell^m}$ is called:
\begin{itemize}
 \item $\ell_p$-stable if $\b{A}(\b{a}) \in \ell^m_p$ for all $\b{a} \in \ell^n_p$
  \item \textit{finite gain} (f.g.) $\ell_p$-stable\footnote{If we say $\b{A}$ is f.g. $\ell_p$-stable without specifiying $\b{a}_0$, it is assumed that $\b{a}_0$ is to be taken as $\b{0}$.} at $\b{a}_0 \in \ell^n_p$, if there exists $\gamma, \beta \geq 0$  such that for all $\b{a} \in \ell^n_p$: $$\Nxy{\b{A}(\b{a})-\b{A}(\b{a}_0)}{p} \leq  \gamma\Nxy{\b{a}-\b{a}_0}{p} + \beta .$$
  \item \textit{incrementally finite gain}\footnote{we will write $(\gamma,\beta)$-f.g. and $(\gamma,\beta)$-i.f.g. if we want to specify the constants of the $\ell_p$-stability property} (i.f.g.) $\ell_p$-stable if there exists $\gamma, \beta \geq 0$  such that for all $\b{a},\b{a'} \in \ell^n_p$ holds: $$\Nxy{\b{A}(\b{a})-\b{A}(\b{a'})}{p} \leq  \gamma\Nxy{\b{a}-\b{a'}}{p} + \beta. $$
 \end{itemize}
\end{defn}
\noindent Notice that while the above definition might not be standard, it allows for stability analysis of equilibria, trajectories and limit cycles all within the same definition. 

  \noindent With respect to \defref{def:stablp-1}, the result of \thmref{thm:redstab} presents general conditions for closed-loop stability of $S'_1$: 
    \begin{thm}\label{thm:redstab}
	Assume that $\CLM{}$ and $\b{F}$ is i.f.g $\ell_p$-stable. Then $\b{\Phi}_{S'_1}[\b{F},\CLM{}]$ is f.g. $\ell_p$-stable\footnote{For simplicity assume $\ell_p$-norm over the product space to be chosen as $\|(\b{w},\b{d},\b{v})\|_{p} := \|\b{w}\|_p + \|\b{d}\|_p + \|\b{v}\|_p$.} at $\bm{\delta}^*$  if
	the residual operator $$\b{\Delta}[\b{F},\CLM{}] := \b{F}(\CLM{})+\b{I} - \CLM{x}$$ is $(\gamma,\beta)$-i.f.g. $\ell_p$-stable with $\gamma <1$.
  \end{thm}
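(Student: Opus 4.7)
The plan is to collapse the perturbed closed-loop equations \eqref{eq:syscl-2-op-x}--\eqref{eq:syscl-2-op-u} into a single fixed-point equation in $\b{\hat{w}}$ and then close a small-gain loop through $\b{\Delta}$. First I will read off $\b{x}=\CLM{x}(\b{\hat{w}})-\b{v}$ from \eqref{eq:syscl-2-op-w}, combine it with $\b{u}=\CLM{u}(\b{\hat{w}})+\b{d}$ to get $(\b{x},\b{u})=\CLM{}(\b{\hat{w}})+(-\b{v},\b{d})$, and substitute into the plant equation. Using the rearrangement $\CLM{x}=\b{F}(\CLM{})+\b{I}-\b{\Delta}$ of the definition of $\b{\Delta}$, the dynamics reduce to
\begin{align*}
\b{\hat{w}} = \b{\Delta}(\b{\hat{w}}) + \b{G}(\b{\hat{w}},\b{v},\b{d}) + \b{w} + \b{v},
\end{align*}
where $\b{G}(\b{\hat{w}},\b{v},\b{d}) := \b{F}(\CLM{}(\b{\hat{w}})+(-\b{v},\b{d})) - \b{F}(\CLM{}(\b{\hat{w}}))$ captures how the exogenous perturbations enter the plant. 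Existence of a solution is automatic: $\b{\Delta}$ is strictly causal since $\b{F}(\CLM{})$ and $\CLM{x}-\b{I}$ both are, so \propref{prop:invab} applies to $\b{I}-\b{\Delta}$.

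Next I will take the difference of this equation at $\bm{\delta}$ and at the nominal $\bm{\delta}^*$, apply the triangle inequality, and use the $(\gamma,\beta)$-i.f.g. hypothesis on $\b{\Delta}$ to move a $\gamma\|\b{\hat{w}}-\b{\hat{w}}^*\|_p$ term to the left-hand side, producing a factor of $(1-\gamma)>0$. Once the $\b{G}$-difference is controlled (see below), this yields a finite-gain bound on $\|\b{\hat{w}}-\b{\hat{w}}^*\|_p$ in terms of $\|\bm{\delta}-\bm{\delta}^*\|_p$ plus a constant that may depend on the fixed $\bm{\delta}^*$. The matching bounds on $\|\b{x}-\b{x}^*\|_p$ and $\|\b{u}-\b{u}^*\|_p$ then follow immediately from $\b{x}=\CLM{x}(\b{\hat{w}})-\b{v}$ and $\b{u}=\CLM{u}(\b{\hat{w}})+\b{d}$ together with the i.f.g. stability of $\CLM{}$, delivering f.g. $\ell_p$-stability of $\b{\Phi}_{S'_1}$ at $\bm{\delta}^*$.

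The main obstacle is bounding $\b{G}(\b{\hat{w}},\b{v},\b{d})-\b{G}(\b{\hat{w}}^*,\b{v}^*,\b{d}^*)$. The naive approach pairs up like $\b{F}$-terms and uses i.f.g. of both $\b{F}$ and $\CLM{}$, but this re-introduces a $2\gamma_F\gamma_\Psi\|\b{\hat{w}}-\b{\hat{w}}^*\|_p$ contribution and would demand the much stronger hypothesis $\gamma+2\gamma_F\gamma_\Psi<1$. The key observation I will exploit is that inside $\b{G}(\b{\hat{w}},\b{v},\b{d})$ the two arguments of $\b{F}$ always differ by exactly $(-\b{v},\b{d})$, independently of $\b{\hat{w}}$, so i.f.g. of $\b{F}$ alone gives $\|\b{G}(\b{\hat{w}},\b{v},\b{d})\|_p \leq \gamma_F\|(\b{v},\b{d})\|_p+\beta_F$ and similarly at the nominal. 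Bounding $\|\b{G}-\b{G}^*\|_p$ by the triangle inequality through these two estimates produces a bound with no $\|\b{\hat{w}}-\b{\hat{w}}^*\|_p$ term at all; the residual dependence on $\|\b{v}^*\|_p$ and $\|\b{d}^*\|_p$ is harmless since it is absorbed into the additive constant of the f.g. inequality at the fixed $\bm{\delta}^*$. This is precisely why the hypothesis $\gamma<1$ on the residual alone is sufficient.
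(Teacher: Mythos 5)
Your overall route coincides with the paper's: you reduce to f.g.\ $\ell_p$-stability of the partial map $\b{\Phi}^{\hat{w}}_{S'_1}$ via $\b{x}=\CLM{x}(\b{\hat{w}})-\b{v}$, $\b{u}=\CLM{u}(\b{\hat{w}})+\b{d}$ and i.f.g.\ stability of $\CLM{}$, you arrive at the same fixed-point equation $\b{\hat{w}} = \b{\Delta}(\b{\hat{w}}) + \bm{\varepsilon}$ with $\bm{\varepsilon} = \b{G}(\b{\hat{w}},\b{v},\b{d}) + \b{w} + \b{v}$, and your ``key observation'' that the two arguments of $\b{F}$ inside $\b{G}$ differ by exactly $(-\b{v},\b{d})$, independently of $\b{\hat{w}}$, is precisely how the paper justifies its terse claim that $\bm{\varepsilon}\in\ell^n_p$ (and your remark that pairwise bounding would instead require $\gamma + 2\gamma_F\gamma_\Psi<1$ is a correct and worthwhile elaboration). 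The gap is in how you close the loop. You propose to take the difference of the equations at $\bm{\delta}$ and $\bm{\delta}^*$, apply the $(\gamma,\beta)$-i.f.g.\ bound on $\b{\Delta}$, and ``move the $\gamma\|\b{\hat{w}}-\b{\hat{w}}^*\|_p$ term to the left-hand side.'' That rearrangement is only legitimate if $\|\b{\hat{w}}-\b{\hat{w}}^*\|_p$ is already known to be finite, and it is not: \propref{prop:invab} gives existence of $\b{\hat{w}}$ only in $\ell^n$, and membership of the closed-loop response in $\ell^n_p$ is part of the conclusion of f.g.\ $\ell_p$-stability (\defref{def:stablp-1} is vacuous otherwise). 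From $c \leq \gamma c + b$ with possibly $c=\infty$ nothing follows, so as written your argument only proves the conditional statement ``if the response difference happens to lie in $\ell^n_p$, then it satisfies the gain bound,'' which is strictly weaker than the theorem.

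This is exactly the point at which the paper instead invokes its small-gain result (\lemref{lem:smg}, proved as \lemref{lem:smallgain} in the appendix), and the mechanism there is what your one-line rearrangement skips: one works with the truncated norms $s_\tau := \|\b{P}^{\tau}(\b{\hat{w}}-\b{\hat{w}}^*)\|_p$, which are finite for every $\tau$ by construction; strict causality of the shifted difference operator $\b{a}\mapsto \b{\Delta}(\b{a}+\b{\hat{w}}^*)-\b{\Delta}(\b{\hat{w}}^*)$ (inherited from the strict causality of $\b{\Delta}$, which you correctly established) gives $\b{P}^{\tau+1}\b{\Delta} = \b{P}^{\tau+1}\b{\Delta}\b{P}^{\tau}$, so the subtraction is valid at every finite horizon; an induction over $\tau$ then yields the uniform bound $s_\tau \leq (\|\bm{\varepsilon}-\bm{\varepsilon}^*\|_p+\beta)/(1-\gamma)$, and since $s_\tau$ is non-decreasing the bound passes to the limit, establishing both finiteness and the gain estimate simultaneously. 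If you replace your left-hand-side rearrangement by an appeal to \lemref{lem:smg} (applied with $\b{a}=\b{\hat{w}}-\b{\hat{w}}^*$ and $\b{b}=\bm{\varepsilon}-\bm{\varepsilon}^*$), the remainder of your proposal, including the absorption of the $\|\b{v}^*\|_p$, $\|\b{d}^*\|_p$ terms into the additive constant at the fixed $\bm{\delta}^*$, goes through and coincides with the paper's proof.
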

  \begin{proof}
  See appendix.
  \end{proof}
\begin{rem}
	Note that $\b{F}$ being i.f.g $\ell_p$-stable does not imply that the open loop system has to be stable. Rather, it can be seen as a generalized smoothness condition. For example: $\b{F}$ is i.f.g $\ell_p$-stable if the components $f_t$ were all chosen as $f_t(x_{t-1:0},u_{t-1:0}) = f(x_{t-1},u_{t-1})$ with some Lipshitz continuous function $f$ in $\R^n$.
\end{rem}  
\begin{rem}
Using the small-gain theorem presented in the Appendix \ref{sec:smallgain-app}, further global and local results can be obtained that do not require the i.f.g property.
\end{rem}
 \noindent  Assuming $\b{F}$ is i.f.g $\ell_p$-stable, an immediate corollary of \thmref{thm:redstab} is that if we choose $\CLM{}\in\feas[\b{F}]$ to be an i.f.g $\ell_p$-stable closed-loop map, then the perturbed closed-loop $S'_1$ is f.g.-$\ell_p$ stable, i.e. this shows that SL controllers implement CLMs in an internally stable way. Moreover, \thmref{thm:redstab} also states that if $\CLM{}$ is an approximate solution to the CLM equation \eqref{eq:opfunceq_red}, then $\SL{\CLM{x}}{\CLM{u}}$ still guarantees robust stability of the perturbed closed-loop $S'_1$. 
 \begin{rem}
 Notice that both \thmref{thm:sufnec} and \thmref{thm:redstab} do not require continuity at any point. Thus, all presented results also apply naturally to the discrete-time hybrid systems setting.
 \end{rem}

\section{Applications for Nonlinear Control Synthesis}

\subsection{Discrete-Time Trajectory Tracking of Continuous-Time Nonlinear System}
\begin{figure*}
\includegraphics[width = \textwidth]{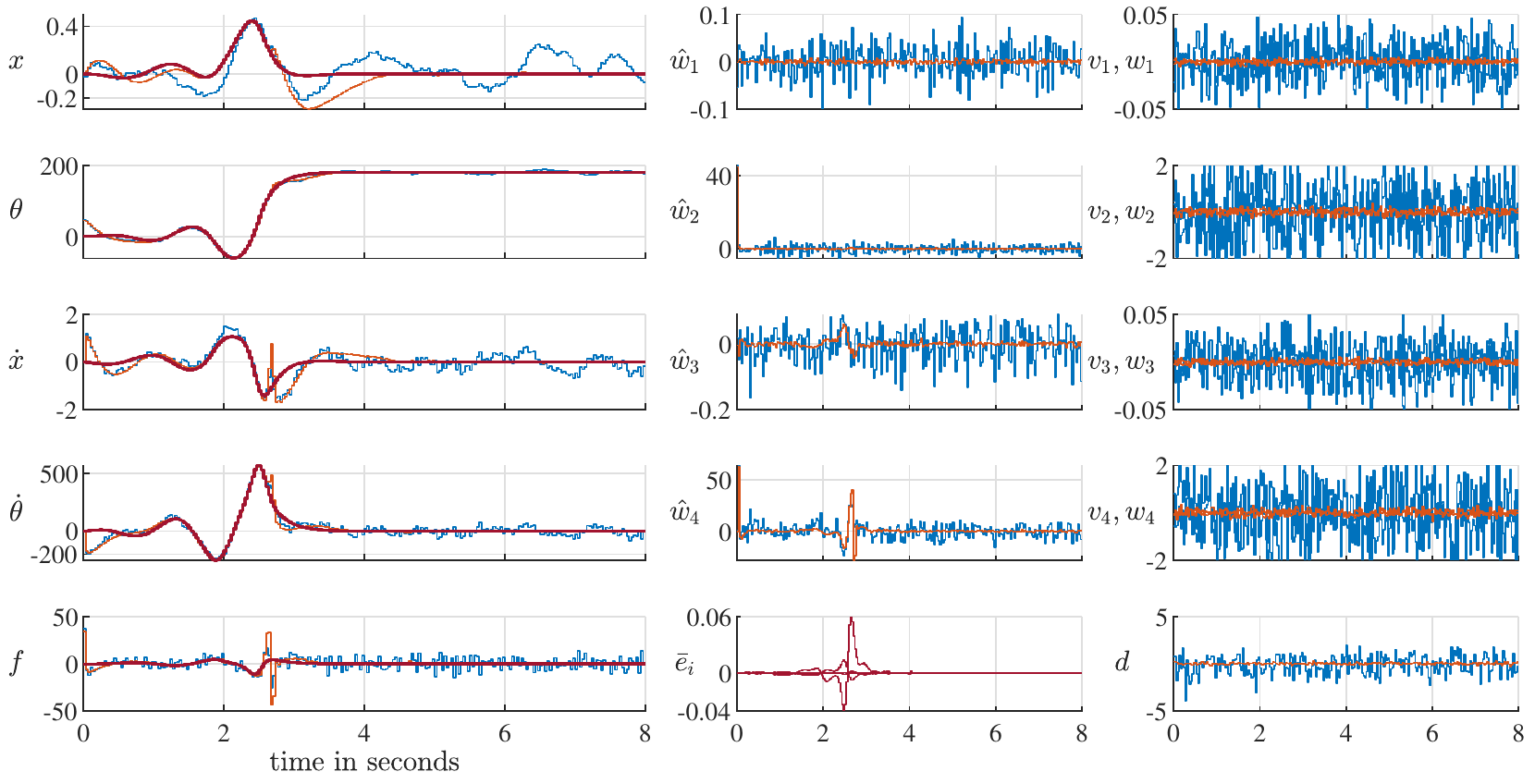}

\caption{Swing-up control for cart-pole system with system level controller at 30Hz sampling rate. ($x$, $\dot{x}$, $\theta$, $\dot{\theta}$, $f$) stand for cart position/velocity, pole angle/velocity and cart force with units ($m$, $m/s$, $deg$, $deg/s$, $N$). $\theta = 180^\circ$ stands for upward pole-position. cart and pole mass is chosen as $1\:$kg and $0.1\:$kg, pole length is chosen as $0.5\:$m. Consult \cite{underactuated} for detailed system description and equations. \textit{Left}: Desired trajectory (red) vs closed-loop performance under initial condition error $\psi(0) = 45^\circ$ and two scenarios of small (orange) and large (blue) system perturbations $\b{w}$, $\b{v}$, $\b{d}$. \textit{Middle}: Evolution of internal state $\hat{w}$ for both scenarios and normalized disturbance due to trajectory error $\bar{e}_i$.  \textit{Right}: i.i.d gaussian perturbations $\b{w}$, $\b{v}$, $\b{d}$. }
\label{fig:invP}
\end{figure*}

\noindent We derived that in theory we only need operators $(\CLM{x},\CLM{u})$ to approximately satisfy the CLM condition \eqref{eq:opfunceq_red} to obtain robustly stabilizing controllers $\SL{\CLM{x}}{\CLM{u}}$. The generality of the robustness result argues that system level controllers could be a promising design tool in practical control applications. Nevertheless, further research needs to quantify the trade-off between grade of approximation of the condition \eqref{eq:opfunceq_red} and corresponding achievable control performance.\\ 
As a first step towards that, we present some first empirical results that show that system level controllers can achieve good robust control performance in challenging-to-control nonlinear systems while using only crude models of the system for synthesis: \figref{fig:invP} shows simulation results of using a system level controller  $\SL{\CLM{x}}{\CLM{u}}$ at $30\,$Hz sampling time to swing up a cart pole system under small and large closed-loop perturbations. Rather than satisfying the CLM condition \eqref{eq:opfunceq_red} of the zero-order hold actuated cart pole system, the maps $\CLM{}$ are synthesized using the following approximations:
\begin{itemize}
\item $\CLM{}$ are taken to be affine operators, where the affine term is a sampled continuous-time desired trajectory $(x^d(t),u^d(t))$ for the system and linear part is chosen to be finite memory ($2\,$s window in continuous-time).
\item The continuous-time trajectories $(x^d(t),u^d(t))$ are low grade approximations of swing-up motions of the cart-pole.
\item $\CLM{}$ are chosen as CLMs of an \textit{approximation} of the linearized system around the desired trajectory. 
\end{itemize} 
\begin{rem}
See Appendix \ref{sec:traj-app} for derivations and more detailed discussion of the synthesis procedure.
\end{rem}
The above simplifications make clear that $\CLM{}$ serves only as a very coarse approximation of the exact CLM condition \eqref{eq:opfunceq_red}. On the other hand, the above approximations allow to synthesize the linear part  of $\CLM{}$ analytically and in parallel, allowing for efficient computation.\\
Leaning on the discussion in \secref{sec:robstab}, the closed-loop $S2$ of the cart pole simulation can be written as
\begin{subequations}
\begin{align}\label{eq:invPcl}
x_{t} &= \phi_{t_s}(x_{t-1},u_{t-1}) + w_t+e_t\\
\hat{w}_t &= x_t+v_t-x^d(t\tau_s)-\sum^{t+1}_{k=2} R_{t,k}\hat{w}_{t+1-k} \\
u_t & = u^d(t\tau_s)+\sum^{t+1}_{k=1} M_{t,k}\hat{w}_{t+1-k}+d_t
\end{align}
\end{subequations}
where $w$, $d$ and $v$ are state, input and internal controller state perturbations and $e_t$ is due to errors in the trajectory synthesis. The matrices $R_{t,k}\in\R^{n\times n}$, $M_{t,k}\in\R^{m\times n}$ are parameterizing the linear part of $\CLM{}$.
Due to the approximation steps taken in the synthesis of $\CLM{}$, there is a considerable gap between the real system and the model used for synthesis. Nevertheless, \figref{fig:invP} shows that despite large model uncertainty, the closed-loop provides robust performance against a variety of perturbations: Large initial condition errors, large perturbation signals and errors in trajectory. 

\subsection{Nonlinear Controller Synthesis for Constrained Linear Systems}
\noindent In our companion paper \cite{YuHo2020}, we show that system level controllers $\SL{\CLM{x}}{\CLM{u}}$ parametrized with a special class of nonlinear maps $\CLM{}=(\CLM{x},\CLM{u})$ are particularly useful for linear systems subject to state/input constraints with actuator saturation. In Appendix \ref{sec:blend-app} we present the generalized procedure to construct such nonlinear system level controllers and derive the robust closed loop stability and performance bounds for linear systems with actuator saturation. 

\noindent The particular controller structure $\SL{\CLM{x}}{\CLM{u}}$ used in \cite{YuHo2020} takes the following general form
\begin{subequations}
\label{eq:nl-blend-large}
\begin{align*}
    \hat{w}_t &= x_t - \sum^{N}_{i=1}\sum^{t+1}_{k=2}R^{i}_{t,k}(P_{\Omega_i}(\hat{w}_{t+1-k})-P_{\Omega_{i-1}}(\hat{w}_{t+1-k}))\\
     u_t &= \sum^{N}_{i=1}\sum^{t+1}_{k=1}M^{i}_{t,k}(P_{\Omega_i}(\hat{w}_{t+1-k})-P_{\Omega_{i-1}}(\hat{w}_{t+1-k}))
\end{align*}
\end{subequations}
where $R^{i}_{t,k}\in \R^{n \times n}$, $M^{i}_{t,k} \in \R^{m \times n}$ and $P_{\Omega_i}(w)$ are projection maps onto some closed bounded convex nested sets $\Omega_1 \subset \dots \subset \Omega_i \subset \dots \subset\Omega_{N-1}$ in $\R^n$. \cite{YuHo2020} shows that the above controllers can be synthesized to merge benefits of different linear controllers together into one nonlinear controller and shows how this proves useful when dealing with optimal control problems that have competing performance and safety objectives. In particular, \cite{YuHo2020} considers the linear optimal control problem of minimizing the closed-loop LQG-cost while also guaranteeing state and input constraints in presence of adversarial disturbances and demonstrates a synthesis procedure for $R^i_{t,k}$, $M^i_{t,k}$ that provably outperforms the optimal linear controller for this problem. An additional benefit of the above SL controller is that it has natural anti-windup properties, i.e. it can be easily synthesized to guarantee closed-loop stability and graceful degradation of performance in presence of actuator saturation (See Appendix \ref{sec:blend-app} for detailed discussion). \\ 
Furthermore, \cite{YuHo2020} points out that the nonlinear "blended" SL controller preserves a main benefit of the original linear SLS framework of \cite{slsacc}: It allows for distributed implementation and naturally incorporates delay and sparsity constraints into the synthesis procedure, all features that allows the approach to scale to the large-scale system setting.

\section{Future Work and Outlook}
The presented discussion and results are in principle applicable or extendable to a broad frontier of possible use cases. Nevertheless, further research is necessary in order to investigate the systems and conditions for which the CLM operator equation can be approximated in a computationally efficient manner. This paper showed some initial results that demonstrate the former can be accomplished with meaningful closed-loop performance in the problem settings of nonlinear trajectory tracking and control synthesis for linear systems with actuator saturation and state constraints. It would be interesting to see if the approach could produce robust trajectory tracking for other nonlinear systems. For example, based on our simulation results, it is conceivable that the approach could be applied to tracking control of trajectories and periodic orbits in other robotic systems.
Another question for future research is whether our discussion in Appendix \ref{sec:blend-app} can be extended to piece-wise linear systems and how other existing approaches in the literature relate to the presented "blended" system level controllers. Furthermore, upcoming work will investigate implications of the presented theory for model predictive control.

\section{Conclusion}
\noindent This paper highlights a surprisingly general and fundamental relationship between closed-loop maps and corresponding realizing controllers in general nonlinear discrete-time systems. The key findings are: 1. All closed-loop maps are solutions to an operator equation and all solutions of the equation are achievable closed-loop maps. 2. Given a solution of the operator equation, we can obtain a realizing controller by parameterizing a system level controller with the solution.  This controller then imposes the given solution as the closed-loop map of the system. 3. This same procedure produces robust closed-loop stability even when the system level controllers are parametrized with approximate solutions of the operator equation.
We conclude with an outlook on the new opportunities this framework could provide for nonlinear control synthesis.
\bibliographystyle{IEEEtran}
\bibliography{IEEEabrv,conbib}

\begin{thebibliography}{10}
\providecommand{\url}[1]{#1}
\csname url@samestyle\endcsname
\providecommand{\newblock}{\relax}
\providecommand{\bibinfo}[2]{#2}
\providecommand{\BIBentrySTDinterwordspacing}{\spaceskip=0pt\relax}
\providecommand{\BIBentryALTinterwordstretchfactor}{4}
\providecommand{\BIBentryALTinterwordspacing}{\spaceskip=\fontdimen2\font plus
\BIBentryALTinterwordstretchfactor\fontdimen3\font minus
  \fontdimen4\font\relax}
\providecommand{\BIBforeignlanguage}[2]{{%
\expandafter\ifx\csname l@#1\endcsname\relax
\typeout{** WARNING: IEEEtran.bst: No hyphenation pattern has been}%
\typeout{** loaded for the language `#1'. Using the pattern for}%
\typeout{** the default language instead.}%
\else
\language=\csname l@#1\endcsname
\fi
#2}}
\providecommand{\BIBdecl}{\relax}
\BIBdecl

\bibitem{YuHo2020}
J.~{Yu} and D.~{Ho}, ``Achieving performance and safety in large scale systems
  with saturation using a nonlinear system level synthesis approach,'' in
  \emph{2020 American Control Conference}, 2020.

\bibitem{SOSTOOLS}
S.~{Prajna}, A.~{Papachristodoulou}, and P.~A. {Parrilo}, ``Introducing
  sostools: a general purpose sum of squares programming solver,'' in
  \emph{Proceedings of the 41st IEEE Conference on Decision and Control,
  2002.}, vol.~1, Dec 2002, pp. 741--746 vol.1.

\bibitem{papachristodoulou2005tutorial}
A.~Papachristodoulou and S.~Prajna, ``A tutorial on sum of squares techniques
  for systems analysis,'' in \emph{Proceedings of the 2005, American Control
  Conference, 2005.}\hskip 1em plus 0.5em minus 0.4em\relax IEEE, 2005, pp.
  2686--2700.

\bibitem{SOSSynthesisPrajnaCT}
S.~{Prajna}, A.~{Papachristodoulou}, and {Fen Wu}, ``Nonlinear control
  synthesis by sum of squares optimization: a lyapunov-based approach,'' in
  \emph{2004 5th Asian Control Conference (IEEE Cat. No.04EX904)}, vol.~1, July
  2004, pp. 157--165 Vol.1.

\bibitem{PrajnaRantzerDensity2004}
S.~{Prajna}, P.~A. {Parrilo}, and A.~{Rantzer}, ``Nonlinear control synthesis
  by convex optimization,'' \emph{IEEE Transactions on Automatic Control},
  vol.~49, no.~2, pp. 310--314, Feb 2004.

\bibitem{SOSNonlinearWangDT}
J.~{Xu}, L.~{Xie}, and Y.~{Wang}, ``Synthesis of discrete-time nonlinear
  systems: A sos approach,'' in \emph{2007 American Control Conference}, July
  2007, pp. 4829--4834.

\bibitem{YPLeong2016}
E.~{Stefansson} and Y.~P. {Leong}, ``Sequential alternating least squares for
  solving high dimensional linear hamilton-jacobi-bellman equation,'' in
  \emph{2016 IEEE/RSJ International Conference on Intelligent Robots and
  Systems (IROS)}, Oct 2016, pp. 3757--3764.

\bibitem{MatanyaDoyle2014}
M.~B. {Horowitz}, A.~{Damle}, and J.~W. {Burdick}, ``Linear hamilton jacobi
  bellman equations in high dimensions,'' in \emph{53rd IEEE Conference on
  Decision and Control}, Dec 2014, pp. 5880--5887.

\bibitem{JohnHJB}
{Wei-Min Lu} and J.~C. {Doyle}, ``$\mathcal{H}_\infty$ control of nonlinear
  systems: a convex characterization,'' \emph{IEEE Transactions on Automatic
  Control}, vol.~40, no.~9, pp. 1668--1675, Sep. 1995.

\bibitem{OMLasserre2008}
\BIBentryALTinterwordspacing
J.~Lasserre, D.~Henrion, C.~Prieur, and E.~Trelat, ``Nonlinear optimal control
  via occupation measures and lmi-relaxations,'' \emph{SIAM Journal on Control
  and Optimization}, vol.~47, no.~4, pp. 1643--1666, 2008. [Online]. Available:
  \url{https://doi.org/10.1137/070685051}
\BIBentrySTDinterwordspacing

\bibitem{HanTedrake2018OccupationMeasures}
W.~{Han} and R.~{Tedrake}, ``Controller synthesis for discrete-time polynomial
  systems via occupation measures,'' in \emph{2018 IEEE/RSJ International
  Conference on Intelligent Robots and Systems (IROS)}, Oct 2018, pp.
  6911--6918.

\bibitem{slsacc}
Y.~S. Wang, N.~Matni, and J.~C. Doyle, ``System level parameterizations,
  constraints and synthesis,'' in \emph{2017 American Control Conference
  (ACC)}, May 2017, pp. 1308--1315.

\bibitem{ho2019scalable}
D.~{Ho} and J.~C. {Doyle}, ``Scalable robust adaptive control from the system
  level perspective,'' in \emph{2019 American Control Conference (ACC)}, 2019,
  pp. 3683--3688.

\bibitem{SLSWang2018}
Y.~{Wang}, N.~{Matni}, and J.~C. {Doyle}, ``Separable and localized
  system-level synthesis for large-scale systems,'' \emph{IEEE Transactions on
  Automatic Control}, vol.~63, no.~12, pp. 4234--4249, Dec 2018.

\bibitem{underactuated}
R.~Tedrake, ``Underactuated robotics: Algorithms for walking, running,
  swimming, flying, and manipulation,'' (Course Notes for MIT 6.832),
  downloaded on 2020-03-30 from http://underactuated.mit.edu.

\bibitem{anderson2019system}
J.~Anderson, J.~C. Doyle, S.~H. Low, and N.~Matni, ``System level synthesis,''
  \emph{Annual Reviews in Control}, 2019.

\bibitem{chen2019system}
Y.~Chen and J.~Anderson, ``System level synthesis with state and input
  constraints,'' \emph{arXiv preprint arXiv:1903.07174}, 2019.

\end{thebibliography}

\onecolumn

\appendices \label{app}
\section{Proofs of \thmref{thm:sufnec} and \thmref{thm:redstab}}
\subsection{Proof of \thmref{thm:sufnec}}
\noindent\textit{Necessity:} Per definition of $\feas[\b{F}]$, $\b{x} = \CLM{x}(\b{w})$, $\b{u} = \CLM{u}(\b{w})$ satisfy the dynamics \eqref{eq:syscl-op} and therefore for any $\b{w}$ holds $\CLM{x}(\b{w}) = \b{F}(\CLM{}(\b{w}))+\b{w}$. \textit{Sufficiency}: Assume $\CLM{}$ is a solution of \eqref{eq:syscl-op}, then $\CLM{x}-\b{I} = \b{F}(\CLM{}) \in \Csxy{\ell^n}{\ell^n}$ since $\b{F}\in\Csxy{\ell^n}{\ell^n}$. Applying \propref{prop:invab}, shows existence of $(\CLM{x})^{-1} \in \Cxy{\ell^n}{\ell^n}$. Now take $\b{K}' = \CLM{u}(\CLM{x})^{-1}$ and let $(\b{x},\b{u}) = \b{\Phi}_{S_1}[\b{F},\b{K}'] (\b{w})$, then $\b{x} = \b{F}(\b{x},\CLM{u}(\CLM{x})^{-1}\b{x}) + \b{w}$. We will apply the identity $(\CLM{x}-\b{I})(\CLM{x})^{-1} + (\CLM{x})^{-1} = \b{I}$ to the left side of the equation and obtain
\begin{align*}
(\CLM{x}-\b{I})(\CLM{x})^{-1} + (\CLM{x})^{-1} \b{x} = \b{F}\CLM{}(\CLM{x})^{-1}\b{x} + \b{w}\\
\Leftrightarrow (\CLM{x}-\b{I}- \b{F}\CLM{})(\CLM{x})^{-1} + (\CLM{x})^{-1}\b{x} =  + \b{w} 
\end{align*}
which due to $\CLM{x}-\b{I} = \b{F}(\CLM{})$, implies $\b{x} = \CLM{x}\b{w}$ and $\b{u} = \CLM{u}(\CLM{x})^{-1}\CLM{x}\b{w} = \CLM{u}\b{w}$.\\
Assume $(\CLM{x},\CLM{u}) = \b{\Phi}_{S_1}[\b{F},\b{K}'] = \b{\Phi}_{S_1}[\b{F},\b{K}'']$ for some $\b{K}', \b{K}''$. Then for all $\b{w}$ holds $\CLM{u}(\b{w})=\b{K}'\CLM{x}(\b{w}) = \b{K}''\CLM{x}(\b{w})$ and since $\CLM{x}$ is invertible, we imply $\b{K''} = \b{K'}$.
\subsection{Proof of \thmref{thm:redstab}}
\noindent Writing $\b{\Phi}_{S'_1}(\bm{\delta})$ as $(\CLM{x},\CLM{u},\b{I})(\b{\Phi}^{\hat{w}}_{S'_1}(\bm{\delta})) + (-\b{v}, \b{d}, \b{0})$ and knowing that $\CLM{}$ is i.f.g. tells us that it's sufficient to prove f.g. $\ell_p$-stability of $\b{\Phi}^{\hat{w}}_{S'_1}$:
Take $\b{\hat{w}} = \b{\Phi}^{\hat{w}}_{S'_1}(\bm{\delta})$ and substitute equation \eqref{eq:syscl-2-op-x} into \eqref{eq:syscl-2-op-w} to obtain 
  $\b{\hat{w}} = \b{\Delta}[\b{F},\CLM{}](\b{\hat{w}}) + \bm{\varepsilon}$ where $\bm{\varepsilon}=\b{F}(\CLM{}(\b{\hat{w}})-(\b{v},-\b{d}))-\b{F}(\CLM{}(\b{\hat{w}}))+\b{w} +\b{v}$. Now, since $\b{F}$ is i.f.g. $\ell_p$-stable, we know $\bm{\varepsilon} \in \ell^n_p$. Similarly, $\b{\hat{w}}^* = \b{\Phi}^{\hat{w}}_{S'_1}(\bm{\delta}^*)$ satisfies $\b{\hat{w}}^* = \b{\Delta}[\b{F},\CLM{}](\b{\hat{w}}^*) + \bm{\varepsilon}^*$, with $\bm{\varepsilon}^* \in \ell^n_p$. Taking the difference of the equations for $\b{\hat{w}}$ and $\b{\hat{w}}^*$ and applying \lemref{lem:smg} gives the desired result.

\begin{lem}\label{lem:smg}[Small Gain Theorem]
If $\b{a} = \b{A}(\b{a}) + \b{b}$, $\b{b} \in \ell^n_p$ with $\b{A} \in \Csxy{\ell^n}{\ell^n}$ and $\b{A}$ is $(\gamma,\beta)$ $\ell_p$-stable with $\gamma<1$, then there exist $(\gamma',\beta')<\infty$ such that $\|\b{a}\|_p \leq \gamma'\|\b{b}\|_p + \beta'$.
\end{lem}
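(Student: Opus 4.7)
The plan is the classical truncation-based small-gain iteration. For each $T \ge 0$ introduce the causal truncation $P_T$ defined by $(P_T\b{x})_k := x_k$ for $k \le T$ and $0$ otherwise; $P_T$ is non-expansive on $\ell^n_p$, and $\|P_T\b{x}\|_p$ increases monotonically to $\|\b{x}\|_p$. Existence of a causal $\b{a}$ solving $\b{a} = \b{A}(\b{a}) + \b{b}$ is not an issue: since $\b{A}$ is strictly causal, so is $-\b{A} = (\b{I}-\b{A})-\b{I}$, and \propref{prop:invab} delivers $(\b{I}-\b{A})^{-1}\in\Cxy{\ell^n}{\ell^n}$, with $\b{a} := (\b{I}-\b{A})^{-1}\b{b}$ having finite coordinates. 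In particular $\alpha_T := \|P_T\b{a}\|_p < \infty$ for every $T$, since $P_T\b{a}$ has finite support.

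The decisive step is to use strict causality to decouple the recursion. Causality gives $P_T\b{A}(\b{x}) = P_T\b{A}(P_T\b{x})$, and strict causality sharpens this to $P_T\b{A}(\b{x}) = P_T\b{A}(P_{T-1}\b{x})$. Applying $P_T$ to the fixed-point equation,
\[
P_T\b{a} \;=\; P_T\b{A}(P_{T-1}\b{a}) + P_T\b{b}.
\]
Since $P_{T-1}\b{a}\in \ell^n_p$ (finite support), the $(\gamma,\beta)$ $\ell_p$-stability hypothesis applies and, combined with non-expansiveness of $P_T$, yields the scalar recursion
\[
\alpha_T \;\le\; \gamma\,\alpha_{T-1} + \beta + \|\b{b}\|_p,\qquad \alpha_{-1} := 0.
\]

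Iterating with a geometric sum gives $\alpha_T \le (\beta + \|\b{b}\|_p)\sum_{k=0}^{T}\gamma^k \le (\beta + \|\b{b}\|_p)/(1-\gamma)$, uniformly in $T$. Letting $T\to\infty$ and invoking monotone convergence of $\alpha_T$ to $\|\b{a}\|_p$ yields
\[
\|\b{a}\|_p \;\le\; \tfrac{1}{1-\gamma}\,\|\b{b}\|_p + \tfrac{\beta}{1-\gamma},
\]
so the claim holds with $\gamma' = 1/(1-\gamma)$ and $\beta' = \beta/(1-\gamma)$.

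The one point deserving care is precisely why the iteration must be carried out on truncations rather than on $\b{a}$ directly: a priori one does not know $\b{a}\in\ell^n_p$, so the $\ell_p$-stability bound cannot be invoked on $\b{a}$ itself. Strict causality is what closes the loop, by forcing $[\b{A}(\b{a})]_T$ to depend only on $a_{0:T-1}$; without it the right-hand side of the estimate would involve $\alpha_T$ as well and the induction would collapse. In the application to \thmref{thm:redstab} the role of $\b{A}$ is played by the centred residual $\b{a}\mapsto\b{\Delta}(\b{\hat{w}}^*+\b{a})-\b{\Delta}(\b{\hat{w}}^*)$, whose $(\gamma,\beta)$ constants are inherited directly from the i.f.g.\ property of $\b{\Delta}[\b{F},\CLM{}]$, so the centred finite-gain bound used above is exactly what that hypothesis supplies.
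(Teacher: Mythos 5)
Your proof is correct and follows essentially the same route as the paper's: truncation operators $\b{P}^\tau$, strict causality to replace $\b{P}^{T}\b{A}(\b{a})$ by $\b{P}^{T}\b{A}(\b{P}^{T-1}\b{a})$ so the $\ell_p$-stability bound can be applied to finitely supported (hence $\ell^n_p$) truncations, and an induction on $T$ closed by monotone convergence of $\|\b{P}^T\b{a}\|_p$ to $\|\b{a}\|_p$. The only cosmetic difference is that you unroll the scalar recursion as a geometric sum while the paper (in \lemref{lem:smallgain}, which it invokes with $\rho=\infty$) maintains the fixed bound $(\Nxy{\b{b}}{p}+\beta)/(1-\gamma)$ as an induction invariant---a formulation it needs only for the local version with finite radius $\rho$.
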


\section{A Brief Discussion on Small Gain Theorem}\label{sec:smallgain-app}
\noindent For convenience of the reader, we will present a self-contained discussion of the small-gain theorem used in this paper. Recall \secref{sec:prelim} for nomenclature.
\subsection{Useful Properties of the Truncation Operator}
\begin{defn}\label{def:trunc}[Truncation Operator]
Define $\b{P}^{\tau}\in\Cxy{\ell^n}{\ell^n}$ as $\b{P}^{\tau}\left(\b{x} \right):= (x_0,x_1,\dots, x_{\tau},0,0,\dots )$.
\end{defn}
\noindent The truncation operator is a projection map and satisfies the following identities:
\begin{coro}\label{coro:trunc}
$\b{P}^{\tau}\in\Cxy{\ell^n}{\ell^n}$ as defined in \defref{def:trunc} satisfies the properties:
\begin{enumerate}[label=(\roman*)]
\item $\b{P}^{\tau}(\b{x}) \in \ell^{n}_p$ for all $\tau$ and $\b{x}$
\inlineitem $\b{P}^\tau(\b{x}+\b{y}) = \b{P}^\tau(\b{x}) + \b{P}^\tau(\b{y})$ 
\inlineitem $\b{P}^{\tau}\b{P}^{\tau} = \b{P}^{\tau}$ 
\item $P^{\tau}_{t:0}(\b{x}) = x_{t:0}$ for $t \leq \tau$.
\inlineitem If $\b{x} \in \ell^n_p$, ($p<\infty$) then $\|\b{x}\|^p_p = \|(\b{I}-\b{P}^\tau)(\b{x})\|^p_p+\|\b{P}^\tau(\b{x})\|^p_p$
\item If $\b{x} \in \ell^n_\infty$, then $\|\b{x}\|_\infty = \max\{\|(\b{I}-\b{P}^\tau)(\b{x})\|_\infty,\|\b{P}^\tau(\b{x})\|_\infty\}$
\end{enumerate}
\end{coro}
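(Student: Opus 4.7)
The plan is to verify each of the six properties (i)--(vi) directly from the componentwise definition of the truncation operator: for every $k\geq 0$, $(\b{P}^\tau\b{x})_k = x_k$ when $k\leq \tau$ and $(\b{P}^\tau\b{x})_k = 0$ otherwise. Equivalently, I will write $\b{P}^\tau\b{x} = (x_0,\dots,x_\tau,0,0,\dots)$ and use this explicit form throughout. All six claims reduce to routine manipulation, so the proposal is organized around which elementary fact is invoked for each.

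For (i), I will observe that $\b{P}^\tau(\b{x})$ has at most $\tau+1$ nonzero entries, hence $\|\b{P}^\tau(\b{x})\|_p^p = \sum_{k=0}^{\tau}|x_k|^p < \infty$ for any $p\in[1,\infty)$ and $\|\b{P}^\tau(\b{x})\|_\infty = \max_{0\leq k\leq \tau}|x_k| <\infty$; thus $\b{P}^\tau(\b{x})\in\ell^n_p$ in every case. For (ii), componentwise $(\b{P}^\tau(\b{x}+\b{y}))_k$ equals $x_k+y_k$ if $k\leq \tau$ and $0$ otherwise, which matches $(\b{P}^\tau\b{x})_k+(\b{P}^\tau\b{y})_k$ entry by entry. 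For (iii), applying $\b{P}^\tau$ to $\b{P}^\tau\b{x}$ leaves entries $0,\dots,\tau$ unchanged (they already equal $x_0,\dots,x_\tau$) and zeros out the trailing entries (already zero), giving $\b{P}^\tau\b{x}$ again. For (iv), the claim is just the statement that for $t\leq \tau$ the first $t+1$ components of $\b{P}^\tau\b{x}$ coincide with $x_{t:0}$, which is immediate from the definition.

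For (v) and (vi), the key structural observation is that the supports of $\b{P}^\tau(\b{x})$ and $(\b{I}-\b{P}^\tau)(\b{x})$ are disjoint: $(\b{P}^\tau\b{x})_k$ vanishes for $k>\tau$ and $((\b{I}-\b{P}^\tau)\b{x})_k = x_k - (\b{P}^\tau\b{x})_k$ vanishes for $k\leq \tau$. Property (v) then follows by splitting the sum defining $\|\b{x}\|_p^p$:
\begin{equation*}
\|\b{x}\|_p^p = \sum_{k=0}^{\infty}|x_k|^p = \sum_{k=0}^{\tau}|x_k|^p + \sum_{k=\tau+1}^{\infty}|x_k|^p = \|\b{P}^\tau(\b{x})\|_p^p + \|(\b{I}-\b{P}^\tau)(\b{x})\|_p^p.
\end{equation*}
For (vi) the same disjoint-support observation gives $\|\b{x}\|_\infty = \sup_{k\geq 0}|x_k| = \max\{\sup_{k\leq \tau}|x_k|,\sup_{k>\tau}|x_k|\}$, which is exactly the claimed maximum since the two suprema coincide with $\|\b{P}^\tau(\b{x})\|_\infty$ and $\|(\b{I}-\b{P}^\tau)(\b{x})\|_\infty$ respectively.

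There is no serious obstacle here; the only thing to be careful about is the $p<\infty$ restriction in (v), where the algebraic split of the $p$-th power sum is valid only because the two truncated sequences have disjoint supports, and the $\infty$-norm case needs the separate statement (vi) because it does not obey an additive decomposition. The whole proof is a short verification from \defref{def:trunc}.
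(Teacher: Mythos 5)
Your verification is correct in all six parts, and it is exactly the routine componentwise argument the paper itself relies on: the paper states \corref{coro:trunc} without proof, treating all properties as immediate consequences of \defref{def:trunc}, which your disjoint-support splitting for (v) and (vi) and direct entrywise checks for (i)--(iv) make explicit. Nothing further is needed.
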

\noindent With the definition \defref{def:trunc}, we can equivalently characterize causal and strictly causal operators as:
\begin{defn}
An operator $\b{Q}:\ell^n \mapsto \ell^m$ is called causal ($\b{Q}\in\Cxy{\ell^n}{\ell^m}$) if $\b{P}^{\tau}\b{Q} = \b{P}^{\tau}\b{Q}\b{P}^{\tau}$ and strictly causal ($\b{Q}\in\Csxy{\ell^n}{\ell^m}$) if $\b{P}^{\tau}\b{Q} = \b{P}^{\tau}\b{Q}\b{P}^{\tau-1}$. 
\end{defn}
\noindent The following Lemma will be used in the derivation of the Small Gain Theorem:
\begin{lem}\label{lem:caus-trunc}
For any $\b{Q}\in\Cxy{\ell^n}{\ell^m}$ holds $\|\b{P}^\tau\b{Q}(\b{x})\|_p \leq \|\b{Q}\b{P}^\tau(\b{x})\|_p$
\end{lem}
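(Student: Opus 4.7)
The proof is essentially a one-line unwinding of the definitions once the two ingredients provided immediately before the statement are lined up, so my plan is to combine them directly rather than doing anything clever.

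The plan is to first rewrite the left-hand side using causality. Because $\b{Q}\in\Cxy{\ell^n}{\ell^m}$, the characterization of causality via truncation gives $\b{P}^\tau\b{Q} = \b{P}^\tau\b{Q}\b{P}^\tau$, so
\[
\b{P}^\tau\b{Q}(\b{x}) \;=\; \b{P}^\tau\bigl(\b{Q}\b{P}^\tau(\b{x})\bigr).
\]
In other words, the output up to time $\tau$ only sees the input up to time $\tau$, which is exactly the statement of causality in the operator form the paper uses.

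Next I would invoke the fact that $\b{P}^\tau$ is non-expansive with respect to $\|\cdot\|_p$. This is an immediate consequence of properties (v) and (vi) of Corollary \ref{coro:trunc}: for any $\b{y}\in\ell^m$ with $p<\infty$, $\|\b{y}\|_p^p = \|\b{P}^\tau(\b{y})\|_p^p + \|(\b{I}-\b{P}^\tau)(\b{y})\|_p^p \ge \|\b{P}^\tau(\b{y})\|_p^p$, and for $p=\infty$ the analogous max-identity yields $\|\b{P}^\tau(\b{y})\|_\infty \le \|\b{y}\|_\infty$. Applied to $\b{y} = \b{Q}\b{P}^\tau(\b{x})$ this yields
\[
\|\b{P}^\tau\b{Q}(\b{x})\|_p \;=\; \|\b{P}^\tau\bigl(\b{Q}\b{P}^\tau(\b{x})\bigr)\|_p \;\le\; \|\b{Q}\b{P}^\tau(\b{x})\|_p,
\]
which is the claim.

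There is essentially no obstacle here: the only subtlety worth mentioning is to handle the $p<\infty$ and $p=\infty$ cases separately when quoting non-expansiveness, since Corollary \ref{coro:trunc} states them in two items. If one wanted to strip the proof further, one could note that both boil down to the fact that $\b{P}^\tau$ zeroes out coordinates and zeroing out coordinates can only decrease (or preserve) any $\ell_p$-norm for $1\le p\le\infty$, so no case split is strictly required.
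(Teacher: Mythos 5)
Your proof is correct and is essentially the paper's own argument, just phrased at the operator level: the paper unwinds $\|\b{Q}\b{P}^\tau(\b{x})\|_p$ component-wise into a head sum and a tail sum, where identifying the head with $\|\b{P}^\tau\b{Q}(\b{x})\|_p$ is precisely your causality identity $\b{P}^\tau\b{Q} = \b{P}^\tau\b{Q}\b{P}^\tau$, and discarding the tail is precisely your non-expansiveness step via properties (v)/(vi) of Corollary \ref{coro:trunc}. The only detail the paper makes explicit that you gloss slightly is the case $\|\b{Q}\b{P}^\tau(\b{x})\|_p = \infty$ (Corollary \ref{coro:trunc}(v) as stated assumes $\ell_p$ membership), but your closing observation that zeroing out coordinates never increases any $\ell_p$-norm covers it, since $\|\b{P}^\tau\b{Q}(\b{x})\|_p$ is always finite.
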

\begin{proof}
Notice that if $\|\b{Q}\b{P}^\tau(\b{x})\|_p = \infty$ the statement is true, since $\|\b{P}^\tau\b{Q}(\b{x})\|_p< \infty$, regardless of $\b{Q}$ (see \corref{coro:trunc}).
\noindent \underline{$p<\infty$:} $\Nxy{\b{Q}\b{P}^{\tau}(\b{x})}{p}^p$ can be decomposed as
\begin{align}
\notag &\Nxy{\b{Q}\b{P}^{\tau}(\b{x})}{p}^p = \sum^{\infty}_{k=\tau+1} \nxy{Q_{k}(0_{n \times k-\tau-1},x_{\tau:0})}{}^p + \sum^{\tau}_{k'=0} \nxy{Q_{k'}(x_{k':0})}{}^p
 = \Nxy{(\b{I}-\b{P}^{\tau})\b{Q}(\b{x})}{p}^p + \Nxy{\b{P}^{\tau}(\b{Q}(\b{x}))}{p}^p
\end{align}
and gives us directly the inequality $\Nxy{\b{P}^{\tau}(\b{Q}(\b{x}))}{p} \leq \sqrt[p]{\Nxy{(\b{I}-\b{P}^{\tau})\b{Q}(\b{x})}{p}^p + \Nxy{\b{P}^{\tau}(\b{Q}(\b{x}))}{p}^p} = \Nxy{\b{Q}(\b{P}^{\tau}(\b{x}))}{p}  $.\\
\noindent \underline{$p=\infty $:} 
$\Nxy{\b{Q}(\b{P}^{\tau}(\b{x}))}{\infty}$ can be decomposed as 
\begin{align}
\notag &\Nxy{\b{Q}(\b{P}^{\tau}(\b{x}))}{\infty}= \max\left\{ \sup_{k \geq \tau+1} \nxy{Q_{k}(0,\dots,x_{\tau:0})}{}, \sup\limits_{k'\leq\tau}\nxy{Q_{k'}(x_{k':0})}{}\right\}
 =  \max\left\{\Nxy{(\b{I}-\b{P}^{\tau})\b{Q}(\b{x})}{\infty}, \Nxy{\b{P}^{\tau}(\b{Q}(\b{x}))}{\infty}\right\}
\end{align}
and similarly leads to $\Nxy{\b{P}^{\tau}(\b{Q}(\b{x}))}{\infty} \leq \Nxy{\b{Q}(\b{P}^{\tau}(\b{x}))}{\infty}$.
\end{proof}

\subsection{A Local Small Gain Theorem}
\noindent Consider the dynamical system in operator form:
\begin{align}\label{eq:ydyn}
\b{y} = \b{\Delta}(\b{y}) + \b{w} \Leftrightarrow (\b{I}-\b{\Delta})(\b{y}) =  \b{w}
\end{align}
where $\b{y}, \b{w} \in \ell^n$ and $\b{\Delta}\in\Csxy{\ell^n}{\ell^n}$. Recall that the Proposition
\begin{prop}\label{prop:invab-2}
 If $(\b{A}-\b{I}) \in \Csxy{\ell^n}{\ell^n}$ then $\b{A}^{-1} \in \Cxy{\ell^n}{\ell^n}$ exists and $\b{b} = \b{A}^{-1}(\b{a})$ satisfies
$ b_t = a_t-A_t(0,b_{t-1:0})$
 \end{prop}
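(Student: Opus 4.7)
The plan is to reduce the equation $\b{A}(\b{b})=\b{a}$ to a concrete recursion that solves for $\b{b}$ coordinate by coordinate, and then check that this recursion defines a causal operator. First I would rewrite the defining equation $\b{A}(\b{b})=\b{a}$ in the equivalent form
\begin{equation*}
\b{b} \;=\; \b{a} \;-\; (\b{A}-\b{I})(\b{b}).
\end{equation*}
The point of this rewriting is that by hypothesis $\b{A}-\b{I}\in\Csxy{\ell^n}{\ell^n}$, so its component function at time $t$ does not depend on its first argument. Translating this into a statement about $A_t$ itself, I get the factorization
\begin{equation*}
A_t(x_t,x_{t-1:0}) \;=\; x_t + A_t(0,x_{t-1:0}),
\end{equation*}
which I would state as a small observation before the main argument.

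Next I would read off the component form of the operator equation. Substituting the factorization into $\b{A}(\b{b})=\b{a}$ yields, for every $t$,
\begin{equation*}
b_t \;=\; a_t \;-\; A_t(0,b_{t-1:0}).
\end{equation*}
Now I would argue existence and uniqueness of $\b{b}$ by induction on $t$: the base case $b_0 = a_0 - A_0(0)$ is determined uniquely by $a_0$, and if $b_{t-1:0}$ have been uniquely determined by $a_{t-1:0}$, then the displayed recursion uniquely determines $b_t$ from $a_{t:0}$. This simultaneously produces the inverse map and gives the formula claimed in the proposition.

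Finally, I would verify that $\b{A}^{-1}$ is indeed causal in the sense of Definition \eqref{def:opdef}. The induction in the previous step shows that $b_t$ is a function only of $a_{t:0}$; so denoting this function by $(A^{-1})_t(a_{t:0})$, I obtain explicit component functions for $\b{A}^{-1}$, placing it in $\Cxy{\ell^n}{\ell^n}$. There is no real obstacle here: the whole argument rests on the observation that strict causality removes the dependence of $A_t$ on $b_t$, turning an implicit operator equation into an explicit scalar recursion, and the only thing to be careful about is not accidentally assuming any continuity or stability of $\b{A}$ in the process, since the statement is purely structural.
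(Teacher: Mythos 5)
Your proposal is correct and follows essentially the same route as the paper's own proof: rewriting $\b{A}(\b{b})=\b{a}$ as $\b{b}=\b{a}-(\b{A}-\b{I})(\b{b})$, exploiting strict causality to obtain the factorization $A_t(x_t,x_{t-1:0})=x_t+A_t(0,x_{t-1:0})$, and reading off the recursion $b_t=a_t-A_t(0,b_{t-1:0})$, which yields existence and uniqueness. Your explicit induction and the explicit verification that $b_t$ depends only on $a_{t:0}$ (hence $\b{A}^{-1}\in\Cxy{\ell^n}{\ell^n}$) merely spell out details the paper leaves implicit in the phrase ``concrete recursive procedure,'' so there is no substantive difference.
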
 
\noindent tells us that $(\b{I}-\b{\Delta})^{-1}\in\Cxy{\ell^n}{\ell^n}$ exists and we can write the mapping from $\b{w}$ to $\b{y}$ as:
\begin{align}\label{eq:ydyn-2}
\b{y} = (\b{I}-\b{\Delta})^{-1}\b{w}
\end{align}
\begin{lem}\label{lem:smallgain}
Assume that for some $\rho>0$ and $0 \leq \gamma <1$, $\beta\geq0$ the operator $\b{\Delta}$ satisfies $\Nxy{\b{\Delta}(\b{x})}{p} \leq \gamma \Nxy{\b{x}}{p}+\beta$ for all $\Nxy{\b{x}}{p} < \rho$ and some $p \in \{1,2,\dots, \infty\}$. Then, for any $\b{w}$ bounded by $\Nxy{\b{w}}{p} < (1-\gamma)\rho - \beta$, the corresponding $\b{y}$ as defined in \eqref{eq:ydyn-2} satisfies the bound:
\begin{align}\label{eq:smallgain}
\Nxy{\b{y}}{p} \leq \frac{1}{1-\gamma}(\Nxy{\b{w}}{p}+\beta)
\end{align}
\end{lem}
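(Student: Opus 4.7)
The plan is to localize the gain estimate by applying truncation operators and running an induction on the truncation horizon $\tau$. Setting $C := \Nxy{\b{w}}{p} + \beta$, the assumption $\Nxy{\b{w}}{p} < (1-\gamma)\rho - \beta$ is equivalent to $C/(1-\gamma) < \rho$, which is the target bound and also the largest intermediate value we can afford while still staying inside the region where the hypothesis $\Nxy{\b{\Delta}(\b{x})}{p} \leq \gamma\Nxy{\b{x}}{p} + \beta$ is available. The induction hypothesis will therefore be
\[
\Nxy{\b{P}^{\tau}\b{y}}{p} \leq C\bigl(1+\gamma+\dots+\gamma^{\tau}\bigr) \leq \tfrac{C}{1-\gamma} < \rho.
\]

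For the base case $\tau=0$, strict causality of $\b{\Delta}$ forces $\Delta_0(y_0) = \Delta_0(0)$, and since $\b{0}$ lies in the ball of radius $\rho$, the hypothesis yields $|\Delta_0(0)| \leq \beta$; hence $|y_0| \leq \beta + |w_0| \leq C$. For the inductive step, apply $\b{P}^{\tau}$ to the equation $\b{y} = \b{\Delta}(\b{y}) + \b{w}$ and note that strict causality gives $\b{P}^{\tau}\b{\Delta}(\b{y}) = \b{P}^{\tau}\b{\Delta}(\b{P}^{\tau-1}\b{y})$, because the $t$-th component of $\b{\Delta}(\b{y})$ for $t\leq\tau$ only sees $y_{t-1:0}$. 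Combining this with \lemref{lem:caus-trunc}, with the triangle inequality, and with the inductive hypothesis $\Nxy{\b{P}^{\tau-1}\b{y}}{p}<\rho$ (so that the local gain bound applies to $\b{P}^{\tau-1}\b{y}$) gives
\[
\Nxy{\b{P}^{\tau}\b{y}}{p} \;\leq\; \Nxy{\b{\Delta}(\b{P}^{\tau-1}\b{y})}{p} + \Nxy{\b{P}^{\tau}\b{w}}{p} \;\leq\; \gamma\,\Nxy{\b{P}^{\tau-1}\b{y}}{p} + \beta + \Nxy{\b{w}}{p},
\]
which advances the induction and keeps $\Nxy{\b{P}^{\tau}\b{y}}{p} \leq C/(1-\gamma) < \rho$.

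Finally, pass to the limit $\tau \to \infty$. Using items (v) and (vi) of \corref{coro:trunc} on the decomposition of $\Nxy{\b{y}}{p}$ (monotone convergence of partial sums for finite $p$, and the supremum characterization for $p=\infty$), the uniform bound on the truncated norms transfers to $\Nxy{\b{y}}{p} \leq \tfrac{1}{1-\gamma}(\Nxy{\b{w}}{p}+\beta)$, which is exactly \eqref{eq:smallgain}.

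The one delicate point to watch is that the local hypothesis is only valid inside the open ball of radius $\rho$, so the induction must simultaneously propagate the quantitative bound \emph{and} the qualitative containment $\Nxy{\b{P}^{\tau-1}\b{y}}{p}<\rho$; this is precisely why the assumption is phrased as strict inequality $\Nxy{\b{w}}{p} < (1-\gamma)\rho - \beta$, leaving a margin that lets the geometric series sum stay strictly below $\rho$ at every stage. Strict causality of $\b{\Delta}$ is what allows the induction to start at $\tau=0$ and what lets each step reference only $\b{P}^{\tau-1}\b{y}$ rather than $\b{P}^{\tau}\b{y}$; without it the recursion would collapse.
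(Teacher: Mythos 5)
Your proof is correct and takes essentially the same route as the paper's: induction on the truncation horizon via $\b{P}^{\tau}$, strict causality to replace $\b{\Delta}(\b{y})$ by $\b{\Delta}(\b{P}^{\tau-1}\b{y})$ inside the truncation, \lemref{lem:caus-trunc} plus the triangle inequality for the recursive estimate, and a monotone limit in $\tau$ to transfer the uniform bound to $\Nxy{\b{y}}{p}$. Your base case is in fact slightly more careful than the paper's (the paper writes $s_0 = |w_0|$, implicitly taking $\Delta_0(0)=0$, whereas you correctly bound $|\Delta_0(0)|\leq\beta$), and your geometric-series induction hypothesis $C(1+\gamma+\dots+\gamma^{\tau})$ is a harmless sharpening of the paper's constant bound $(\Nxy{\b{w}}{p}+\beta)/(1-\gamma)$.
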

\begin{proof}
%
Pick an arbitrary $\b{w}$ such that $\Nxy{\b{w}}{p} < (1-\gamma)\rho-\beta$ and notice 
\begin{align}\label{eq:redef}
\Nxy{\b{w}}{p} < (1-\gamma)\rho-\beta \Leftrightarrow \frac{\Nxy{\b{w}}{p} +\beta}{(1-\gamma)}<\rho.
\end{align}
Now, let $\b{y}$ be the corresponding response of the dynamic system \eqref{eq:ydyn}. Furthermore, define the scalar sequence $s_{\tau}:=\|\b{P}^{\tau}(\b{y})\|_{p}$, i.e.:
\begin{align}\label{eq:staudef} 
s_{\tau} := \left\{ \begin{array}{cl} \sqrt[p]{\sum^{\tau}_{k=0} \nxy{y_k}{}^p } & \text{for }p < \infty \\ \sup_{k\leq \tau} \nxy{y_k}{}  & \text{for }p = \infty  \end{array} \right..
\end{align}
We will show $s_\tau \leq (\Nxy{\b{w}}{p}+\beta)/(1-\gamma)$ for all $\tau$ per induction:\\

\noindent \underline{$\tau = 0$}: $s_0 = |w_0| \leq \Nxy{\b{w}}{p} \leq (\Nxy{\b{w}}{p}+\beta)/(1-\gamma)$. \\
\noindent \underline{$\tau \rightarrow \tau + 1$:}
Assume $s_{\tau}$ satisfies $s_{\tau} \leq (\Nxy{\b{w}}{p}+\beta)/(1-\gamma)$. Then, due to \eqref{eq:redef}, we have $s_{\tau}=\|\b{P}^{\tau}(\b{y})\|_p < \rho$ and using the small gain property we know: 
\begin{align}\label{eq:smallgain-step}
\Nxy{\b{\Delta}(\b{P}^{\tau}(\b{y}))}{p} \leq \gamma \Nxy{\b{P}^{\tau}(\b{y})}{p}+\beta.
\end{align}
 Then, substituting the dynamics \eqref{eq:ydyn} into the definition \eqref{eq:staudef} and using strict causality of $\b{\Delta}$ gives us
\begin{align}
s_{\tau+1}=\Nxy{\b{P}^{\tau+1}(\b{y})}{p} = \Nxy{\b{P}^{\tau+1}(\b{\Delta}(\b{y}) + \b{w})}{p} =\Nxy{\b{P}^{\tau+1}(\b{\Delta}(\b{P}^{\tau}\b{y}) + \b{w})}{p},
\end{align}
and by using property (iii) of \lemref{lem:caus-trunc} we obtain the following chain of inequalities:
\begin{subequations}
\label{eq:stausteps-1}
\begin{align}
\label{eq:pdelta}s_{\tau+1}=&\Nxy{\b{P}^{\tau+1}\b{\Delta}(\b{P}^{\tau}\b{y}) + \b{P}^{\tau+1}\b{w}}{p} 
\leq \Nxy{\b{P}^{\tau+1}(\b{\Delta}(\b{P}^{\tau}\b{y}))}{p} + \Nxy{\b{w}}{p}
\leq  \Nxy{\b{\Delta}(\b{P}^{\tau}(\b{y}))}{p}+\Nxy{\b{w}}{p}.  
\end{align}
\end{subequations}
Now, we can further upperbound \eqref{eq:pdelta}, by using \eqref{eq:smallgain-step} with our induction assumption $s_{\tau} \leq (\Nxy{\b{w}}{p}+\beta)/(1-\gamma)$:
\begin{subequations}
\label{eq:stausteps-2}
\begin{align}
s_{\tau+1}\leq & \gamma \Nxy{\b{P}^{\tau}(\b{y})}{p}+\beta+\Nxy{\b{w}}{p} = \gamma s_{\tau} + (1-\gamma)\frac{\Nxy{\b{w}}{p}+\beta}{1-\gamma} \\
\label{eq:indcomp}\leq &\gamma \frac{\Nxy{\b{w}}{p}+\beta}{1-\gamma} + (1-\gamma)\frac{\Nxy{\b{w}}{p}+\beta}{1-\gamma} = \frac{\Nxy{\b{w}}{p}+\beta}{1-\gamma},
\end{align}
\end{subequations}
Hence, \eqref{eq:indcomp} completes the induction step and we can conclude that $s_\tau \leq (\Nxy{\b{w}}{p}+\beta)/(1-\gamma) < \rho$ holds for all $\tau$.\\

\noindent Finally, since $s_\tau$ is non-decreasing per construction, we know that $\lim_{\tau \rightarrow \infty} s_{\tau} = s^* = \Nxy{\b{y}}{p}$ exists and satisfies $$\Nxy{\b{y}}{p} \leq \frac{1}{1-\gamma}(\Nxy{\b{w}}{p}+\beta).$$
\end{proof}

\noindent The following Theorems follow as Corollaries of \lemref{lem:smallgain}:
\begin{thm}
Consider system \eqref{eq:ydyn} and assume that $\b{\Delta}$ satisfies $\Nxy{\b{\Delta}(\b{x})}{p} \leq \gamma \Nxy{\b{x}}{p}+\beta$ for all $\b{x} \in \ell^n_p$. Then, correspondingly for all $\b{w} \in \ell^n_p$, the system response $\b{y}$ satisfies the bound  
$$\Nxy{\b{y}}{p} \leq \frac{1}{1-\gamma}(\Nxy{\b{w}}{p}+\beta).$$
\end{thm}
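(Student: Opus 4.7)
The plan is to derive this as an immediate corollary of \lemref{lem:smallgain}. The only difference between the two statements is that the local version assumes the contraction inequality $\Nxy{\b{\Delta}(\b{x})}{p} \leq \gamma\Nxy{\b{x}}{p} + \beta$ only on a ball of radius $\rho$, while the present statement assumes it on all of $\ell^n_p$. Since the local hypothesis holds with arbitrarily large $\rho$, the global conclusion should drop out by taking $\rho$ large enough for a fixed $\b{w}$.

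Concretely, I would fix an arbitrary $\b{w} \in \ell^n_p$ and observe that since $\|\b{w}\|_p$ is finite, I can choose any
$$\rho > \frac{\Nxy{\b{w}}{p} + \beta}{1-\gamma},$$
which is equivalent to $\Nxy{\b{w}}{p} < (1-\gamma)\rho - \beta$. Because the global hypothesis subsumes the hypothesis of \lemref{lem:smallgain} on the $\rho$-ball, the local small gain theorem applies and yields the same bound
$$\Nxy{\b{y}}{p} \leq \frac{1}{1-\gamma}(\Nxy{\b{w}}{p}+\beta).$$
Since this bound does not depend on $\rho$, the choice of $\rho$ is immaterial and the conclusion holds for every $\b{w} \in \ell^n_p$.

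A small point worth spelling out is that $\b{y}$ itself is well-defined as an element of $\ell^n$ regardless of any norm considerations: strict causality of $\b{\Delta}$ and \propref{prop:invab-2} guarantee that $(\b{I} - \b{\Delta})^{-1} \in \Cxy{\ell^n}{\ell^n}$, so $\b{y} = (\b{I} - \b{\Delta})^{-1}\b{w}$ is an unambiguous causal response. Membership of $\b{y}$ in $\ell^n_p$ is then certified by the monotone-limit argument embedded in the proof of \lemref{lem:smallgain}: the truncations $s_\tau = \Nxy{\b{P}^\tau \b{y}}{p}$ form a non-decreasing sequence bounded above by $(\Nxy{\b{w}}{p}+\beta)/(1-\gamma)$, and so converge to the finite value $\Nxy{\b{y}}{p}$ satisfying the same bound.

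There is no real obstacle here; the statement is essentially a restatement of \lemref{lem:smallgain} in the limit $\rho \to \infty$. The only thing one must be careful about is not confusing the identity $\b{y} \in \ell^n_p$ (which is part of the conclusion) with a prior assumption on $\b{y}$: it is only $\b{w}$ that is assumed to lie in $\ell^n_p$, and the $\ell_p$-membership of $\b{y}$ is extracted from the uniform truncation bound, exactly as in the proof of the local version.
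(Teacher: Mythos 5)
Your proof is correct and follows essentially the same route as the paper, whose entire proof reads ``Use \lemref{lem:smallgain} with $\rho = \infty$''; your choice of a finite $\rho > (\Nxy{\b{w}}{p}+\beta)/(1-\gamma)$ for each fixed $\b{w}$ is just the rigorous spelling-out of that $\rho = \infty$ instantiation. The additional remarks on well-definedness of $\b{y}$ via \propref{prop:invab-2} and on $\ell_p$-membership being a conclusion rather than an assumption are accurate and consistent with the mechanism of the paper's local lemma.
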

\begin{proof}
Use \lemref{lem:smallgain} with $\rho = \infty$.
\end{proof}

\begin{thm}
Similar to \lemref{lem:smallgain}, assume that for some $\rho>0$ and $0 \leq \gamma <1$, the operator $\b{\Delta}$ satisfies $\Nxy{\b{\Delta}(\b{x})}{p} \leq \gamma \Nxy{\b{x}}{p}$ for all $\Nxy{\b{x}}{p} < \rho$. Then, at $\b{w}=0$, the operator $(\b{I}-\b{\Delta})^{-1}$ is a locally continuous map $\ell^n_p \mapsto \ell^n_p$.
\end{thm}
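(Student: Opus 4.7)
The plan is to reduce the statement directly to \lemref{lem:smallgain}, which is the workhorse already available. The hypothesis here is exactly the special case $\beta=0$ of that Lemma, so the quantitative conclusion specializes to
\[
\Nxy{(\b{I}-\b{\Delta})^{-1}\b{w}}{p} \leq \frac{1}{1-\gamma}\Nxy{\b{w}}{p}
\]
for every $\b{w}\in\ell^n_p$ with $\Nxy{\b{w}}{p}<(1-\gamma)\rho$. This bound is essentially the entire content of local continuity at the origin.

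First, I would verify that $(\b{I}-\b{\Delta})^{-1}(\b{0})=\b{0}$, so that the continuity statement is framed correctly. Applying the hypothesis at $\b{x}=\b{0}$ gives $\Nxy{\b{\Delta}(\b{0})}{p}\leq \gamma\cdot 0=0$, hence $\b{\Delta}(\b{0})=\b{0}$. Since $\b{\Delta}\in\Csxy{\ell^n}{\ell^n}$, \propref{prop:invab-2} gives existence and uniqueness of $(\b{I}-\b{\Delta})^{-1}$, and its recursive formula immediately yields $(\b{I}-\b{\Delta})^{-1}(\b{0})=\b{0}$.

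Next, given $\varepsilon>0$, I would set
\[
\delta := \min\bigl\{(1-\gamma)\rho,\,(1-\gamma)\varepsilon\bigr\}>0.
\]
For any $\b{w}\in\ell^n_p$ with $\Nxy{\b{w}}{p}<\delta$, the first inequality $\delta\leq(1-\gamma)\rho$ licenses the use of \lemref{lem:smallgain}, so the bound above applies and combines with $\delta\leq(1-\gamma)\varepsilon$ to yield $\Nxy{(\b{I}-\b{\Delta})^{-1}\b{w}}{p}\leq\varepsilon$. This is continuity of $(\b{I}-\b{\Delta})^{-1}$ at $\b{w}=\b{0}$ in the $\ell^n_p$ topology, and simultaneously shows $(\b{I}-\b{\Delta})^{-1}$ maps the open $\delta$-ball around $\b{0}$ into $\ell^n_p$, as claimed.

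There is no real obstacle here; the only subtlety worth flagging is that \lemref{lem:smallgain} is only guaranteed to produce $\b{y}\in\ell^n_p$ under the restriction $\Nxy{\b{w}}{p}<(1-\gamma)\rho$, which is exactly why the continuity is asserted \emph{locally} rather than globally on $\ell^n_p$. If one later wanted global Lipschitz continuity on $\ell^n_p$, it would require the small-gain bound to hold on all of $\ell^n_p$ (i.e. $\rho=\infty$), which is precisely the hypothesis of the preceding global theorem.
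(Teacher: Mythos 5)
Your proof is correct and takes essentially the same route as the paper: both arguments reduce the claim to \lemref{lem:smallgain} with $\beta=0$ and choose $\delta=(1-\gamma)\min\{\rho,\varepsilon\}$ (the paper formally applies the lemma with $\rho$ replaced by $\min\{\rho,\varepsilon\}$, while you invoke it at radius $\rho$ directly, but the resulting estimate and threshold are identical). Your explicit verification that $\b{\Delta}(\b{0})=\b{0}$ and hence $(\b{I}-\b{\Delta})^{-1}(\b{0})=\b{0}$ is a small point the paper leaves implicit, and it is correct.
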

\begin{proof}
Fix $\varepsilon' > 0$ and pick $\delta' = (1-\gamma)\min\{\rho,\varepsilon'\}$. To apply \lemref{lem:smallgain}, substitute $\rho$ in the setup of \lemref{lem:smallgain} with $\min\{\rho,\varepsilon'\}$, set $\beta=0$ and we conclude that for all $\Nxy{\b{w}}{p} < \delta'$ holds 
\begin{align*}
\Nxy{\b{y}}{p}  =& \Nxy{(\b{I}-\b{\Delta})^{-1}\b{w}}{p} \leq \frac{1}{1-\gamma} \Nxy{\b{w}}{p}\\
 <& \frac{\delta'}{1-\gamma} = \min\{\rho,\varepsilon'\}\leq\varepsilon'
 \end{align*}

\end{proof}

\begin{rem}
If we associate $w_0$ as the initial condition $y_0$ and we can rewrite bounds of \lemref{lem:smallgain} as:
\begin{subequations}
\begin{align}
&\Nxy{\b{y}}{p} \leq \frac{1}{1-\gamma}\left( \sqrt[p]{\nxy{y_0}{p}^p+\Nxy{\b{w}}{p}^p} + \beta\right) &  1\leq p < \infty \\
&\Nxy{\b{y}}{\infty} \leq \frac{1}{1-\gamma}\left(\max\left\{\nxy{y_0}{},\Nxy{\b{w}}{\infty}\right\}+\beta\right) & p=\infty
\end{align}
\end{subequations}
\end{rem}

\section{Discrete-Time Trajectory Tracking Control for Nonlinear Continuous-Time Systems}\label{sec:traj-app}
\noindent Using the cart pole system as an example, we will demonstrate how to develop a system level controller to track trajectories for nonlinear continuous-time systems. We will use the description of the cart pole as presented in \cite{underactuated} and refer to the same reference for detailed derivations. The dynamic equations of the cart pole are
\begin{subequations}
\label{eq:cartpole}
\begin{align}
      (m_c + m_p)\ddot{x}_c + m_p l \ddot\theta_p \cos\theta_p - m_p l \dot\theta^2_p \sin\theta_p = f \\
      m_p l \ddot{x}_c \cos\theta + m_p l^2 \ddot\theta_p + m_p g l \sin\theta_p = 0
\end{align}
\end{subequations}
where $x_c$ and $\theta_p$ stand for cart position and pole angle in counterclockwise direction and $f$ represents the force exerted on the cart. Furthermore, $\theta =0$ denotes the downward position. The parameters ($m_c$, $m_p$, $l$, $g$) are chosen as ($1$ kg, $0.1$ kg, $0.5$ m, $9.81 \mathrm{m}/\mathrm{s}^2$) and represent cart and pole mass, pole length and the gravity constant, respectively. Furthermore, \eqref{eq:cartpole} can be converted into the input affine standard form 
\begin{align}\label{eq:inputaffine}
\dot{x} = F(x)+g(x)u
\end{align}\\
where $x = [x_c,\theta_p,\dot{x}_c,\dot{\theta}_p]^T$, $u = f$, (see \cite{underactuated} for description of $F(x)$ and $g(x)$). 
As in practice controllers are usually implemented digitally, we will assume zero-order hold on the input $u$ with a sampling time of $\tau_s = 0.033 \mathrm{sec}$ ($1/\tau_s= 30\mathrm{Hz}$). Because of this discretization, we can equivalently represent the system \eqref{eq:inputaffine} at sampling times through the discrete-time system 
\begin{align}\label{eq:dtphi}
&x_{t} = \phi_{\tau_s}(x_{t-1},u_{t-1}), \quad \phi_{\tau_s}(x,u):= \alpha(\tau_s),\text{ s.t. : } \dot{\alpha} = F(\alpha,u), \alpha(0) = x,
\end{align}
where we will denote $x_t := x(t \tau_s)$ and $u_t := u(t \tau_s)$ ($t \in \mathbb{N}$) to be samples of the continuous-time signals $x(\tau)$, $u(\tau)$ at time $t \tau_s$. To put \eqref{eq:dtphi} into operator form, define $\b{F}^{\phi}\in \Csxy{\ell^n\times\ell^m}{\ell^n}$ with the component functions $F^\phi_t(x_{t:0},u_{t:0}):= \phi_{\tau_s}(x_{t-1},u_{t-1})$ and equation \eqref{eq:dtphi} can be written in terms of the trajectories $(\b{x},\b{u})$ as $\b{x} = \b{F}^{\phi}(\b{x},\b{u})$.
\begin{rem}
We will use the variable $\tau$ to indicate that a variable $a(\tau)$ is a continuous-time signal and use $a_t$ to refer to the discrete-time samples $a_t := a(t\tau_s)$.
\end{rem}

We will use an approximate trajectory reference $x^d(\tau), u^d(\tau)$ computed from the continuous-time model \eqref{eq:inputaffine} as shown in red in \figref{fig:invP}. The trajectories approximately satisfy the continuous-time system $\dot{x}^d(\tau) \approx F(x^d(\tau)) + g(x^d(\tau)) u^d(\tau))$ and are designed to swing up the pole in 3 seconds and then keep the cart pole at $x_c=0$, $\theta_p = \pi$. 
\begin{rem}
Notice that even if $\dot{x}^d(\tau) = F(x^d(\tau)) + g(x^d(\tau)) u^d(\tau))$, it still doesn't hold $x^d_t \neq \phi_{\tau_s}(x^d_{t-1},u^d_{t-1})$, since the continuous-time trajectories are computed without consideration of the zero-order hold actuation.
\end{rem}
\noindent We will take the following linear approximation of $\phi_{\tau_s}$ around the reference trajectory:
\begin{align}\label{eq:approx}
\phi_{\tau_s}(x_{t-1},u_{t-1}) \approx x^d_t + \underbrace{\exp(\nabla F|_{x^d_{t-1}}\tau_s)}_{=:\hat{A}_{t-1}} *(x_{t-1}-x^d_{t-1}) + \underbrace{\int^{\tau_s}_{0} \exp(\nabla F|_{x^d_{t-1}}\tau) g(x^d_{t-1}) d \tau}_{=:\hat{B}_{t-1}} * (u_{t-1}-u^d_{t-1})
\end{align}
Denote $[A_{t-1},B_{t-1}] := \nabla_{x,u}\phi_{\tau_s}|_{(x^d_{t-1},u^d_{t-1})}$ the true linearization of $\phi_{\tau_s}$ at $(x^d_{t-1},u^d_{t-1})$ and notice that the above approximation \eqref{eq:approx} is only an approximation of the linearization, since $[\hat{A}_{t-1},\hat{B}_{t-1}] \neq [A_{t-1},B_{t-1}]$. Using Taylor's theorem and assuming $\phi_{\tau_s}$ is differentiable, we can write 
$\phi_{\tau_s}$ as 
\begin{align}
\phi_{\tau_s}(x_{t-1},u_{t-1}) = \phi_{\tau_s}(x^d_{t-1},u^d_{t-1}) + A_{t-1}(x_{t-1}-x^d_{t-1})+B_{t-1}(u_{t-1}-u^d_{t-1}) + r_{t-1}(x_{t-1}-x^d_{t-1},u_{t-1}-u^d_{t-1})
\end{align}
where $\lim_{|z|\rightarrow 0} |r_{t-1}(z)|/|z| = 0$. We can factor out equation \eqref{eq:dtphi} into the following components
\begin{align}\label{eq:syseq}
x_t = x^d_t + \hat{A}_{t-1}(x_{t-1}-x^d_{t-1}) + \hat{B}_{t-1}(u_{t-1}-u^d_{t-1}) + e_t + e'_t(x_{t-1},u_{t-1}) 
\end{align}
where $e_t$ and $e'_t(x_{t-1},u_{t-1})$ are disturbance terms introduced due to errors in the reference trajectory and linearization
\begin{subequations} 
\begin{align}
e_t &= \phi_{\tau_s}(x^d_{t-1},u^d_{t-1})-x^d_t\\
 e'_t(x_{t-1},u_{t-1})  &= (A_{t-1}-\hat{A}_{t-1})(x_{t-1}-x^d_{t-1}) + (B_{t-1}-\hat{B}_{t-1})(u_{t-1}-u^d_{t-1}) + r_{t-1}(x_{t-1}-x^d_{t-1},u_{t-1}-u^d_{t-1}).
\end{align}
\end{subequations}
and the remaining terms represent our linear approximation of the dynamics.
We can express this more compactly in operator form:
Define $\b{\tilde{F}}^\phi\in \Csxy{\ell^n}{\ell^n}$ with the components
$$\tilde{F}^\phi_t(x_{t:0},u_{t:0}) := x^d_t + \hat{A}_{t-1}(x_{t-1}-x^d_{t-1}) + \hat{B}_{t-1}(u_{t-1}-u^d_{t-1}) $$
and the residual $\b{\Delta}^\phi\in\Cxy{\ell^n}{\ell^n}$ with the components
$\Delta^\phi_t(x_{t:0},u_{t:0}) = e_t + e'_{t}(x_{t-1},u_{t-1}) $, then we can factor out the original equations of \eqref{eq:dtphi} as
\begin{align}
\b{x} = \b{\tilde{F}}^\phi(\b{x},\b{u}) + \b{\Delta}^\phi(\b{x},\b{u})
\end{align}
where per definition we have the decomposition $\b{\tilde{F}}^\phi + \b{\Delta}^\phi = \b{F}^\phi$.\\

Our approach for synthesis is now to use $\b{\tilde{F}}^\phi$ as a model to the design a system level controller and treat $\b{\Delta}^\phi$ as disturbance terms we want to be robust against. We do this, by first solving for CLMs $\b{\tilde{\Psi}} = (\b{\tilde{\Psi}}^{x},\b{\tilde{\Psi}}^{u})$ of $\b{\tilde{F}}^\phi$ and then choosing our feedback controller as $\SL{\b{\tilde{\Psi}}^{x}}{\b{\tilde{\Psi}}^{u}}$.\\

Due to \thmref{thm:sufnec}, $\b{\tilde{\Psi}}$ is a CLM of $\b{\tilde{F}}^\phi$ if and only if it satisfies the CLM equation \eqref{eq:opfunceq_red} for $\b{\tilde{F}}^\phi$, i.e: $\b{\tilde{\Psi}}^{x} = \b{\tilde{F}}^\phi(\b{\tilde{\Psi}})+\b{I}$. We will restrict $\b{\tilde{\Psi}}$ to be of the affine form $\b{\tilde{\Psi}} = (\b{r},\b{m}) + (\b{R},\b{M})$ with $(\b{r},\b{m})\in \ell^n\times \ell^m$ and $\b{R} \in \LCxy{\ell^n}{\ell^n}$, $\b{M} \in \LCxy{\ell^n}{\ell^m}$. Thus the component functions of $\b{\tilde{\Psi}}$ take the from 
\begin{align} \label{eq:clm-ansatz}
& \tclm{x}{t}(\alpha_{1:t+1}) = \sum^{t+1}_{k=1} R_{t,k}\alpha_{k} + r_t\\
 &\tclm{u}{t}(\alpha_{1:t+1}) = \sum^{t+1}_{k=1} M_{t,k}\alpha_{k} + m_t
\end{align}
where $R_{t,j}\in\R^{n \times n}$, $M_{t,j} \in \R^{n \times m}$, $r_t$, $m_t$ are some fix sequences and $\alpha_{1:t+1}$ denote the $t+1$ arguments of the component function. Structuring $\b{\tilde{\Psi}}$ in this form and using linearity of $\b{\tilde{F}}^\phi$ reduces the original operator equation $\b{\tilde{\Psi}}^{x} = \b{\tilde{F}}^\phi(\b{\tilde{\Psi}})+\b{I}$ simply to the following set of linear equations for $R_{t,j}$, $M_{t,j}$ and $r_t$, $m_t$:
\begin{align}\label{eq:LTVcond-2}
&R_{t,k} = \hat{A}_{t-1}R_{t-1,k-1} + \hat{B}_{t-1}M_{t-1,k-1}\text{ for all }k\leq t, \quad R_{t,1} = I\\
&r_t = x^d_t, \quad m_t =  u^d_t
\end{align}
Equation \eqref{eq:LTVcond-2} is an affine subspace constraint and opens up many possible ways to synthesize for solutions $R_{t,k}$, $M_{t,k}$. In fact, \eqref{eq:LTVcond-2} matches the linear time-varying formulation of SLS as discussed in \cite{ho2019scalable}, \cite{anderson2019system} and for our case-study here, we are synthesizing for $R_{t,k}$, $M_{t,k}$ by solving the following $\mathcal{H}_2$/ LQR problem for the LTV system $\b{\tilde{F}}^\phi$ subject to an FIR constraint with horizon $T=60$ time-steps:
\begin{align}\label{eq:QP}
\begin{array}{rc}\min\limits_{R_{t,k},M_{t,k}} &\sum\limits_{0 \leq t \leq H}\sum\limits_{{ 1 \leq k \leq T}}\|R_{t,k}\|^2_{F}+\|M_{t,k}\|^2_{F}\\
s.t.&R_{t,k} = \hat{A}_{t-1}R_{t-1,k-1} + \hat{B}_{t-1}M_{t-1,k-1}\\
 & R_{t,1} = I, \quad R_{t,T} = 0
\end{array}
\end{align}
 \begin{rem}
 See \cite{anderson2019system} for details of the $\mathcal{H}_2$/LQR problem setup and derivation of the convex optimization problem.
The FIR horizon can be understood as a time-window $[t,t+T]$ given to the controller to kill off the disturbance $\hat{w}_t$. Considering the sampling time of our example, $T=60$ translates here to a $2$-second window in continuous-time. 
\end{rem}
Furthermore, $H$ denotes the length of the trajectory in sampling time-steps. The above problem can be solved in closed form, since it is a QP without inequality constraints. Moreover, the change of variables $R_{j+h,j+1}$,$M_{j+h,j+1}$ with $0\leq j\leq T-1$, $0\leq h \leq H$ shows that \eqref{eq:QP} can be decomposed over $h$ into $H$ separate QP's that can be solved analytically and in parallel, hence showing that the computational complexity of our synthesis approach is \textbf{independent} of the trajectory length $H$.

The solutions of \eqref{eq:QP} are taken to parametrize the operators \eqref{eq:clm-ansatz} which give us the system level controller $\SL{\b{\tilde{\Psi}}^{x}}{\b{\tilde{\Psi}}^{u}}$. The resulting closed loop of the cart pole system \eqref{eq:dtphi} and controller $\SL{\b{\tilde{\Psi}}^{x}}{\b{\tilde{\Psi}}^{u}}$ can be put into the form of our robust stability analysis in \secref{sec:robstab}:
\begin{subequations}
\begin{align}
x_{t} &= \phi_{t_s}(x_{t-1},u_{t-1}) + w_t\\
\hat{w}_t &= x_t+v_t-x^d(t\tau_s)-\sum^{t+1}_{k=2} R_{t,k}\hat{w}_{t+1-k} \\
u_t & = u^d(t\tau_s)+\sum^{t+1}_{k=1} M_{t,k}\hat{w}_{t+1-k}+d_t
\end{align}
\end{subequations}
Furthermore, referring to \thmref{thm:redstab}, it can be verified that the residual operator $\b{\Delta}^\phi$ we defined earlier matches the residual operator of \thmref{thm:redstab}, i.e.: $\b{\Delta}[\b{\tilde{F}},\b{\tilde{\Psi}}] = \b{\Delta}^\phi$. Thus, \thmref{thm:redstab} applies to our problem setting directly. More specifically, the local result \lemref{lem:smallgain} can be used to obtain robust stability guarantees: If the lumped residual terms are $(\gamma,\beta)$ $\ell_p$-stable with $\gamma<1$, then the closed loop system is f.g. $\ell_p$-stable for small enough perturbations.

 \section{ Blending SL controllers}\label{sec:blend-app}
\noindent We will present a class of SL controllers that can be viewed as nonlinear "blends" of multiple linear controllers and demonstrate their use in application to linear systems with state and input constraints. Consider the maps $\CLM{} = (\CLM{x},\CLM{u})$ of the form:
\begin{subequations}
\label{eq:NLblend}
\begin{align}
\CLM{x} &= \b{I} + \sum^{N}_{i=1} (\b{R}^{i}-\b{I})\b{G}^{i} \\
 \CLM{u} &= \sum^{N}_{i=1} \b{M}^{i}\b{G}^{i} 
\end{align}
 \end{subequations}
where $\b{R}^i\in \LCxy{\ell^n}{\ell^n}$, $\b{M}^i\in \LCxy{\ell^n}{\ell^m}$ are $N$ linear operators. In addition, $\b{R}^{i}$ are chosen such that $\b{R}^{i}-\b{I}\in\LCsxy{\ell^n}{\ell^n}$. Let $R^i_{t,k} \in \R^{n \times n}$ and $M^i_{t,k} \in \R^{n \times m}$ be the matrices associated with the component functions $R^i_t$, $M^i_t$ of $\b{R}^i$, $\b{M}^i$:
\begin{align}
&R^i_t(z_{1:t+1}) = \sum^{t+1}_{k=1} R^i_{t,k}z_k &M^i_t(z_{1:t+1}) = \sum^{t+1}_{k=1} M^i_{t,k}z_k.
\end{align}
\begin{rem}
 $\b{R}^{i}-\b{I}$ being strictly causal is equivalent to $R^i_{t,1} = I$ for all $t$ and $i$.
 \end{rem}
\noindent Using the above notation, the implementation of $\SL{\CLM{x}}{\CLM{u}}$ can be written out as
\begin{subequations}
\label{eq:nl-blend}
\begin{align}
\label{eq:nl-blend-1}
    \hat{w}_t &= x_t - \sum^{N}_{i=1}\sum^{t+1}_{k=2}R^{i}_{t,k}\tilde{w}^i_{t+1-k}\\
    \tilde{w}^i_t &= G^i_t(\hat{w}_{t:0})\\
    \label{eq:nl-blend-2} u_t &= \sum^{N}_{i=1}\sum^{t+1}_{k=1}M^{i}_{t,k}\tilde{w}^i_{t+1-k}
\end{align}
\end{subequations}
\begin{rem}
In the above implementation we are assuming explicit knowledge of the functions $G^i_t(.)$ and therefore $\tilde{w}$ is not an internal state, but merely a placeholder. In an implementation where $G^i_t(.)$ were to be computed implicitly via recursion, additional internal states would need to be added and included in the stability analysis. 
\end{rem}

\noindent Next, we will derive under which conditions the operator $\CLM{}$ of the form \eqref{eq:NLblend} is a closed loop map of a general linear causal system. Define a system in operator form
\begin{align}\label{eq:linF}
\b{x} = \b{F}(\b{x},\b{u}) +\b{w}
\end{align}
where we assume $\b{F}:\ell^n\times \ell^m \mapsto \ell^n$ to be a linear, strictly causal operator. Due to linearity, we can split $\b{F}$ into two operators $\b{F}^x:\ell^n \mapsto \ell^n$, $\b{F}^u:\ell^m \mapsto \ell^n$ and write $\b{F}$ as $\b{F}(\b{x},\b{u}):=(\b{F}^x(\b{x})+\b{F}^u(\b{u}))$. Correspondingly, $\b{F}$ has the component functions
\begin{align}
F_t(x_{t:0},u_{t:0}) = \sum^{t+1}_{k=1}F^x_{t,k} x_{t+1-k} + \sum^{t+1}_{k=1}F^u_{t,k} u_{t+1-k}
\end{align} 
where $F^x_{t,k} \in \R^{n \times n}$, $F^u_{t,k} \in \R^{n \times m}$ and due to the strict causality we have 
\begin{align}
F^x_{t,1} = 0,\quad  F^u_{t,1} = 0, \quad \text{for all }t.
\end{align}
With the above definitions the usual system description is
\begin{align}
x_{t} = \sum^{t}_{k=1}F^x_{t,k+1} x_{t-k} + \sum^{t}_{k=1}F^u_{t,k+1} u_{t-k} + w_t
\end{align}
Now recall that per construction, the implementation $\SL{\CLM{x}}{\CLM{u}}$ ensures the identity $\CLM{x}\b{\hat{w}} = \b{x}$, $\CLM{u}\b{\hat{w}} = \b{u}$. Plugging this relation into \eqref{eq:linF} and using \eqref{eq:nl-blend} to simplify the equations results in the following dynamics for $\hat{w}$
\begin{align}\label{eq:intw-ref}
\b{\hat{w}} = \b{F}^x(\b{I}-\sum^{N}_{i=1} \b{G}^i)\b{\hat{w}}  + \sum^{N}_{i=1} \b{\Delta}^i \b{G}\b{\hat{w}}  + \b{w}
\end{align}
In light of our discussion in \secref{sec:robstab}, the solutions $\b{\hat{w}}$ of \eqref{eq:intw-ref} characterize the partial map $\b{\Phi}^{\hat{w}}_{S'_1}[\b{F},\CLM{}]$ and it holds $\b{\hat{w}}=\b{\Phi}^{\hat{w}}_{S'_1}[\b{F},\CLM{}](\b{w},\b{0},\b{0})$.
We denoted $\b{\Delta}^i$ as the residual linear operators
\begin{align}\label{eq:resid_i}
\b{\Delta}^i:= \b{F}^x\b{R}^i+\b{F}^u\b{M}^i + \b{I} - \b{R}^i.
\end{align}
which are capturing the residual terms of the CLM equation of the maps $(\b{R}^i,\b{M}^i)$. From the above equation, we see that if we choose each of $(\b{R}^i,\b{M}^i)$ to be feasible closed-loop maps for the system \eqref{eq:linF} (i.e. $\b{\Delta}^i = 0$) and also choose $\b{G}^i$ to satisfy $\sum^{N}_{i=1} \b{G}^i = \b{I}$, then the operator $\CLM{} = (\CLM{x},\CLM{u})$ of \eqref{eq:nl-blend} becomes a CLM of the linear system \eqref{eq:linF}. This is summarized in the Lemma below:
\begin{lem}\label{lem:nlblend}
Assume $\b{F}$ is linear and strictly causal and define $\CLM{}=(\CLM{x},\CLM{u})$ according to \eqref{eq:nl-blend} with some linear $(\b{R}^i,\b{M}^i) \in \feas[\b{F}]$ and potentially nonlinear $\b{G}^i$. Then, if $\b{G}^i$ satisfy $\sum^{N}_{i=1} \b{G}^i = \b{I}$,
then $\CLM{}\in \feas[\b{F}]$.
\end{lem}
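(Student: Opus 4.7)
The plan is to apply \thmref{thm:sufnec}, which reduces the claim $\CLM{} \in \feas[\b{F}]$ to verifying the single operator identity $\CLM{x} = \b{F}(\CLM{x}, \CLM{u}) + \b{I}$. Using the linear splitting $\b{F}(\b{x}, \b{u}) = \b{F}^x(\b{x}) + \b{F}^u(\b{u})$, this is equivalent to
\begin{align*}
\CLM{x} = \b{F}^x \CLM{x} + \b{F}^u \CLM{u} + \b{I}.
\end{align*}
Two ingredients will do the work: (i) each $(\b{R}^i, \b{M}^i) \in \feas[\b{F}]$, which via \thmref{thm:sufnec} is equivalent to the residual identity $\b{\Delta}^i = \b{F}^x \b{R}^i + \b{F}^u \b{M}^i + \b{I} - \b{R}^i = \b{0}$ defined in \eqref{eq:resid_i}, and (ii) the partition-of-unity hypothesis $\sum_{i=1}^N \b{G}^i = \b{I}$.

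I would first substitute the expressions for $\CLM{x}$ and $\CLM{u}$ from \eqref{eq:NLblend} into $\b{F}^x \CLM{x} + \b{F}^u \CLM{u} + \b{I}$ and distribute the linear operators $\b{F}^x, \b{F}^u$ across the sums on the right. This reshapes the right-hand side into
\begin{align*}
\b{F}^x + \sum_{i=1}^N \bigl(\b{F}^x \b{R}^i + \b{F}^u \b{M}^i - \b{F}^x\bigr) \b{G}^i + \b{I}.
\end{align*}
Applying the feasibility identities $\b{F}^x \b{R}^i + \b{F}^u \b{M}^i = \b{R}^i - \b{I}$ term-by-term, and then using linearity once more together with the partition hypothesis to collapse $\sum_i \b{F}^x \b{G}^i = \b{F}^x(\sum_i \b{G}^i) = \b{F}^x$, cancels both copies of $\b{F}^x$ and leaves exactly $\b{I} + \sum_{i=1}^N (\b{R}^i - \b{I}) \b{G}^i = \CLM{x}$, closing the verification.

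The only point worth flagging carefully, and the nearest thing to an obstacle, is the repeated use of right-distributivity of composition to pull $\b{F}^x, \b{F}^u$ inside the sums and then factor them back out. The preliminaries explicitly warn that $\b{C}(\b{A}+\b{B}) \neq \b{C}\b{A} + \b{C}\b{B}$ in general, but this failure is exactly a failure of linearity of $\b{C}$; for the linear $\b{F}^x, \b{F}^u$ here the identity holds, so the rearrangements are justified. Beyond this bookkeeping there is no real difficulty. As a sanity check, the same conclusion can be read directly off of the internal-state recursion \eqref{eq:intw-ref} derived earlier in the paper: under the hypotheses both $\b{I} - \sum_i \b{G}^i$ and each $\b{\Delta}^i$ vanish, so $\b{\hat{w}} = \b{w}$ identically in the closed loop driven by $\SL{\CLM{x}}{\CLM{u}}$, exhibiting $\b{x} = \CLM{x}\b{w}$ and $\b{u} = \CLM{u}\b{w}$ as the realized closed-loop map.
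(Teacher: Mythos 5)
Your verification is correct and is in substance the paper's own argument: the paper proves the lemma by deriving the internal-state recursion \eqref{eq:intw-ref} with the residuals \eqref{eq:resid_i} and observing that $\b{\Delta}^i=\b{0}$ together with $\sum_{i=1}^{N}\b{G}^i=\b{I}$ forces $\b{\hat{w}}=\b{w}$ --- precisely the realization-based sanity check you append --- while your primary computation (substituting \eqref{eq:NLblend} into the CLM equation, cancelling via the residual identities, and invoking the sufficiency direction of \thmref{thm:sufnec}) is the same algebra with the same splitting $\b{F}=\b{F}^x+\b{F}^u$, merely packaged as a direct operator identity rather than through the closed-loop state. You were also right to flag that pulling $\b{F}^x,\b{F}^u$ through the sums is the one step touched by the paper's warning on failure of right-distributivity, and that linearity of $\b{F}$ is exactly what licenses it.
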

\noindent We will discuss the importance of the above nonlinear system level controllers for control design problems in linear systems with input saturation and state constraints.
\subsection{Linear Systems with Input Saturation and State Constraints}

\noindent Let's consider the LTI system $H$
\begin{align}\label{eq:lin-0}
H:\quad x_{t} &= Ax_{t-1} + Bu_{t-1} + w_t
\end{align}
and the corresponding nonlinear system $H'$ obtained by modifying \eqref{eq:lin-0} to have actuator saturation:
\begin{align}\label{eq:linsat}
H':\quad x_{t} &= Ax_{t-1} + B\sat(u_{t-1}|\cl{U}) + w_t.
\end{align}
We will define $\sat$ as a generalized notion of a saturation function: Given some closed bounded convex set $\mathcal{U}$ with $0\in \cl{U}$, we will take $\sat (\cdot | \cl{U}) : \R^m \mapsto \cl{U}$ to be the projection map onto the set $\cl{U}$ with the following property:
\begin{align}
\begin{array}{rl} |\sat(u| \cl{U})-u| = \min & |u'| \\
\mathrm{s.t.} & u + u' \in \cl{U}
\end{array}
\end{align}
\begin{rem}
We would like to point out that the following results are formulated w.r.t to some general norm $|.|$ in $\R^n$. Results w.r.t to a particular norm $|.|_p= (\sum^{n}_{i=1} |x_i|^p)^{1/p}$ can be obtained, by replacing $|.|$ with $|.|_p$ in all statements. Moreover, recall the corresponding implied definitions of $\ell^n_p$ and $\|\cdot\|_{p}$ from \secref{sec:prelim}.
\end{rem}
\noindent Correspondingly, define $\b{F}^{H}\in\LCsxy{\ell^n\times \ell^m}{\ell^n}$ and $\b{F}^{H'}\in\Csxy{\ell^n\times \ell^m}{\ell^n}$ with the component functions:
\begin{align}
&F^{H}_t(x_{t:0},u_{t:0}) =Ax_{t-1} + Bu_{t-1} &F^{H'}_t(x_{t:0},u_{t:0}) =Ax_{t-1} + B\sat(u_{t-1}|\cl{U})
\end{align} 

\subsubsection{Satisfying State Constraints in Unsaturated Regime}
Assume that given some specified convex sets $\cl{X}\in \R^n$ and $\cl{W}\in\R^n$, we want to design a controller for the nonlinear system $H'$ such that for any disturbance sequence $\b{w}$ in $\cl{W}$, (i.e. $w_t \in \cl{W}$ for all $t$) the state $x_t$ is always guaranteed to stay within the set $\cl{X}$ if $x_0 \in \cl{X}$. A general approach to tackle this problem within the context of SLS has been presented in \cite{chen2019system}. For polytopic sets $\cl{X}$, $\cl{U}$ and $\cl{W}$, the authors in \cite{chen2019system} propose an efficient method to synthesize controllers $\SL{\b{R}}{\b{M}}$ for this problem. By casting the problem as a convex optimization problem, \cite{chen2019system} computes linear-time invariant CLMs $(\b{R},\b{M}) \in \feas(\b{F}^H)$ of the linear system $H'$ that satisfy:
\begin{align}\label{eq:invarsls}
\forall \b{w}\in \cl{W}:\quad \b{R}(\b{w}) \subset \cl{X},\quad \b{M}(\b{w}) \subset \cl{U}
\end{align}
\begin{rem}
For sets $\cl{S} \subset \R^n$, we will define the notation $\b{a} \subset \cl{S}$ to mean $a_t \in \cl{S}$ for all $t$.
\end{rem}
\noindent It is clear from this setup that for sequences $\b{w} \in \cl{W}$, the map $(\b{R},\b{M})$ is a CLM for the linear system $H$ as well as the nonlinear system $H'$, (i.e.: $(\b{R},\b{M}) \in \feas(\b{F}^{H'})$). This is because $\CLM{u}$ is designed to never actually saturate the actuator for disturbances in $\cl{W}$. Hence, the resulting controller $\SL{\b{R}}{\b{M}}$ solves to the original problem for the nonlinear system $H'$. Recalling our definition of the closed loop mappings with system level implementations from \secref{sec:robstab}, this can be expressed as: $\b{\Phi}_{S'_1}[\b{F}^{H'},(\b{R},\b{M})] (\b{w}) \subset \cl{X} \times \cl{U} $ \\

\subsection{Stability and Convergence to Target Set in Saturated Regime }
\noindent Since modeling uncertainties are unavoidable when dealing with the real world, it is hard to have complete certainty whether a disturbance $w_t$ will satisfy $w_t\in\cl{W}$ for all time in a practical application. Therefore, a natural extension of the above problem setting is that we require the controller to degrade performance gracefully in case the disturbance leaves the set $\cl{W}$ occasionally and as a consequence actuator saturation does occur. It is commonly known that graceful performance degradation can not be taken for granted, as instability phenomenon like the "wind-up" effect can occur if controller synthesis improperly deals with actuator saturation. For stable system matrices $A$, \cite{chen2019system} shows a modification based on the IMC principle of the controller $\SL{\b{R}}{\b{M}}$ that guarantees closed loop stability in the case of $\b{w} \notin \cl{W}$. Next, we will propose a simple modification based on the nonlinear SLS approach that comes with additional of results for stability and performance analysis of the nonlinear closed loop $H'$:\\

\noindent We will augment the previous linear controller $\SL{\b{R}}{\b{M}}$ by incorporating the maps $\b{R}$, $\b{M}$ into a nonlinear system level controller, such that $\ell_p$ stability of the nonlinear closed loop of $H'$ is guaranteed. Moreover, global (for stable $A$) and local (for unstable $A$) stability results and corresponding transient bounds are derived for the closed loop. Convergence to $\cl{X}$ in finite time is shown for $\ell_p$ perturbations with ($p<\infty$). 

\noindent Take $\b{R}$, $\b{M}$ to be solutions that satisfy \eqref{eq:invarsls} from the previous sections and define the nonlinear maps $\CLM{} = (\CLM{x},\CLM{u})$ as 
\begin{subequations}
\label{eq:NLblend-sat-0}
\begin{align}
\CLM{x} &= \b{I} + (\b{R}-\b{I})\b{G} + (\b{R}^{'}-\b{I})\b{G}^{'} \\
 \CLM{u} &= \b{M}\b{G} + \b{M}^{'}\b{G}^{'}
\end{align}
 \end{subequations}
 where we choose $\b{G}$ and $\b{G}'$ as 
 \begin{align}
 G_t(w_{t:0}) &= \sat (w_t | \cl{W}) \\
 G'_t(w_{t:0}) &= w_t-\sat (w_t | \cl{W})
 \end{align}
and we choose the matrices of $\b{R}'$ and $\b{M}'$ as
\begin{align}
 &R'_{t,k} = \left\{ \begin{array}{cl} A^{k-1},&\forall k\leq \bar{T}+1 \\ 0 &\text{else}  \end{array}\right. &  &M'_{t,k} = 0 
 \end{align}
 
Now, notice that because of our choice of $\b{G}$ and $\b{G}'$ we trivially satisfy $\b{G}+\b{G}' = \b{I}$. Furthermore, since $\b{G}$ maps onto $\cl{W}$ and $\b{M}(\b{w}) \in \cl{U}$ for any $\b{w} \in \cl{W}$, we immediately have $\sat(\b{M} \b{G}|\cl{U}) = \b{M} \b{G}$. Recalling that per design we chose $\b{M}'=0$, we can conclude that:

The closed loop of $\SL{\CLM{x}}{\CLM{u}}$ with $H'$ can be written as the new system $H''$:
\begin{subequations}
\label{eq:nl-cl-2}
\begin{align}
H'':\quad 	x_t &= A x_{t-1} + B\sat (u_{t-1} | \cl{U}) + w_t\\
    \hat{w}_t &= x_t - \sum^{t+1}_{k=2}R_{t,k}(\sat (\hat{w}_{t+1-k} | \cl{W})) - \sum^{\min\{t+1,\bar{T}+1\}}_{k=2}R'_{t,k}(\hat{w}_{t+1-k} -\sat (\hat{w}_{t+1-k} | \cl{W}))\\    
	 u_t &= \sum^{t+1}_{k=1}M_{t,k}(\sat (\hat{w}_{t+1-k} | \cl{W}))
\end{align}
\end{subequations}
and denote $\b{\Phi}_{H''}:\b{w} \mapsto (\b{x},\b{u},\b{\hat{w}})$ the corresponding closed loop map.

\noindent As discussed in \secref{sec:robstab}, checking stability of the partial map $\b{\Phi}^{\hat{w}}_{H''}$ is sufficient to guarantee internal stability of the closed loop (i.e. stability in presence of general perturbations $\bm{\delta}$ as introduced in closed loop $S'_1$ in \secref{sec:robstab}). 
 We will follow the derivation of the dynamics $\b{\hat{w}}=\b{\Phi}^{\hat{w}}_{H''}(\b{w})$ as in \secref{sec:robstab}.
 Now, notice that since $\b{M}'=0$ and the $\b{G}(\b{\hat{w}}) = \sat (\hat{w}_t| \cl{W}) \in \cl{W}$ the saturation function satisfies
 \begin{align}
\notag \sat(\b{M} \b{G}+\b{M}' \b{G}'| \cl{U}) &= \sat(\b{M} \b{G}| \cl{U})\\
  &= \b{M} \b{G} +\b{M}' \b{G}'
 \end{align}
  and the residual $\b{\Delta}[\b{F}^{H'},\CLM{}] = \b{F}^{H'}(\CLM{}) + \b{I} - \b{\CLM{x}}$ can be split into the terms $\b{\Delta}$ and $\b{\Delta}'$ that are the residual operators corresponding to $(\b{R},\b{M})$ and $(\b{R}',\b{M}')$ w.r.t. to the linear system $H$:
\begin{align}\label{eq:resid_i}
&\b{\Delta}:= \b{F}^{H}(\b{R},\b{M}) + \b{I} - \b{R} & \b{\Delta}':= \b{F}^{H}(\b{R}',\b{M}') + \b{I} - \b{R}'
\end{align} 
The closed loop dynamics $\b{\hat{w}}=\b{\Phi}^{\hat{w}}_{H''}(\b{w})$ then take the form of equation \eqref{eq:intw-ref}:
\begin{align}\label{eq:intw-ref-2}
\b{\hat{w}} =   \b{\Delta} \b{G}\b{\hat{w}} + \b{\Delta}' \b{G}'\b{\hat{w}}  + \b{w}
\end{align}
Since $(\b{R},\b{M})$ are CLMs of the linear system $H$, we have $\b{\Delta} =0$. Moreover, since $\b{M}'=0$, then also $\b{\Delta}'$ reduces to 
 \begin{align}
\Delta'_t(a_{t:0}) := \left\{ \begin{array}{cl} A^{\bar{T}}a_{t-\bar{T}},&\forall t\geq \bar{T} \\ 0 &\text{else}  \end{array}\right.
\end{align} 
and the dynamics of $\b{\hat{w}}=\b{\Phi}^{\hat{w}}_{H''}(\b{w})$ reduce to
\begin{align}\label{eq:int-dyn}
\hat{w}_t = A^{\bar{T}}(\hat{w}_{t-\bar{T}}-\sat(\hat{w}_{t-\bar{T}}|\cl{W})) + w_t
\end{align}
Define $\cl{B}_\eta := \{w|\:\: |w| < \eta \}$ to be the ball of radius $\eta$ corresponding to the norm $|\cdot |$.
The following closed loop stability result follows:
\begin{lem}\label{eq:awp} 
Define $\bar{\eta}:= \sup\{\eta|\:\:\cl{B}_\eta \subset \cl{W}  \}$ and assume $\bar{\eta}>0$. Then, it holds:
\begin{enumerate} 
\item For any $0\leq\gamma<\min\{1,|A^{\bar{T}}|\}$ holds: $$\|\b{w}\|_p \leq (1-\gamma)\frac{|A^{\bar{T}}|\bar{\eta}}{|A^{\bar{T}}|-\gamma} \quad \implies \quad \|\b{\hat{w}}\|_p \leq \frac{1}{1-\gamma}\|\b{w}\|_p $$
\item If $|A^{\bar{T}}| <1$, then the following bound holds for all $\b{w}$: $$ \|\b{\hat{w}}\|_p \leq \frac{1}{1-|A^{\bar{T}}|}\|\b{w}\|_p$$
\end{enumerate}

\end{lem}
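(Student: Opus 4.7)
The plan is to cast the reduced dynamics \eqref{eq:int-dyn} as a fixed-point equation $\b{\hat{w}} = \b{\Delta}''(\b{\hat{w}}) + \b{w}$ and then verify the appropriate small-gain bound for the strictly causal operator $\b{\Delta}'' \in \Csxy{\ell^n}{\ell^n}$ defined by $\Delta''_t(a_{t:0}) := A^{\bar{T}}\bigl(a_{t-\bar{T}} - \sat(a_{t-\bar{T}}|\cl{W})\bigr)$ for $t \geq \bar{T}$ and $0$ otherwise (the $\bar{T}$-step delay makes $\b{\Delta}''$ strictly causal). The central pointwise ingredient is a bound on the saturation residual $g(x):=x-\sat(x|\cl{W})$: by definition of $\sat$ one has $|g(x)|=\mathrm{dist}(x,\cl{W})$, and since $\cl{B}_{\bar{\eta}}\subset\cl{W}$ this yields $|g(x)| \leq \mathrm{dist}(x,\cl{B}_{\bar{\eta}}) = \max(0,\,|x|-\bar{\eta})$, so in particular $|g(x)|\leq|x|$ always.

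For Part~(2), I would combine $|g(x)|\leq|x|$ with submultiplicativity of the induced norm to obtain the global estimate $\|\b{\Delta}''(\b{x})\|_p \leq |A^{\bar{T}}|\,\|\b{x}\|_p$ on all of $\ell^n_p$. Since $|A^{\bar{T}}|<1$, the global version of \lemref{lem:smallgain} (i.e.\ taking $\rho=\infty$ and $\beta=0$) then delivers $\|\b{\hat{w}}\|_p\leq(1-|A^{\bar{T}}|)^{-1}\|\b{w}\|_p$ directly.

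For Part~(1), the real work is tuning the local small-gain constants. Set $\gamma':=\gamma/|A^{\bar{T}}|$, which satisfies $\gamma'<1$ because $\gamma<|A^{\bar{T}}|$, and let $\rho:=\bar{\eta}/(1-\gamma')=|A^{\bar{T}}|\bar{\eta}/(|A^{\bar{T}}|-\gamma)$. A short algebraic check establishes the pointwise implication $|x|\leq\rho \implies |g(x)|\leq\gamma'|x|$: the case $|x|\leq\bar{\eta}$ is automatic since $g(x)=0$, and the case $|x|>\bar{\eta}$ reduces via the saturation-residual bound to $|x|-\bar{\eta}\leq\gamma'|x|\iff|x|(1-\gamma')\leq\bar{\eta}$. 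Using the inequality $|x_t|\leq\|\b{x}\|_p$ (valid for every $p\in\{1,\ldots,\infty\}$), the norm hypothesis $\|\b{x}\|_p\leq\rho$ propagates the pointwise bound uniformly in $t$, so summing gives $\|\b{\Delta}''(\b{x})\|_p\leq|A^{\bar{T}}|\gamma'\|\b{x}\|_p=\gamma\|\b{x}\|_p$. Invoking \lemref{lem:smallgain} with this $\gamma$, $\beta=0$, and radius $\rho$ then returns exactly the claimed bound $\|\b{\hat{w}}\|_p\leq(1-\gamma)^{-1}\|\b{w}\|_p$ when $\|\b{w}\|_p\leq(1-\gamma)\rho$.

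The main subtlety I foresee is reconciling the weak inequality appearing in the hypothesis of Part~(1) with the strict one used in the statement of \lemref{lem:smallgain}. Since the local small-gain estimate for $\b{\Delta}''$ actually holds on the \emph{closed} ball $\{\|\b{x}\|_p\leq\rho\}$, the induction inside the proof of \lemref{lem:smallgain} carries through with $\leq$ everywhere; alternatively, one can approximate $\b{w}$ from strictly below by a sequence $\b{w}^{(k)}$ with $\|\b{w}^{(k)}\|_p < (1-\gamma)\rho$ and pass to the limit using the truncated-norm monotonicity from the proof of \lemref{lem:smallgain}.
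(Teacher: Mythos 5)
Your proposal is correct and takes essentially the same route as the paper: the same pointwise residual bound $|x-\sat(x|\cl{W})|\leq\max\{0,|x|-\bar{\eta}\}$ derived from the generalized saturation, the same strictly causal delay operator $\b{\Delta}''$, the same local radius $\rho=|A^{\bar{T}}|\bar{\eta}/(|A^{\bar{T}}|-\gamma)$, and a direct appeal to \lemref{lem:smallgain} globally for part (2) and locally for part (1). You in fact supply two details the paper leaves implicit — the explicit verification of the local small-gain property (the paper only asserts ``it can be verified'') and the reconciliation of the weak inequality in the lemma's hypothesis with the strict one in \lemref{lem:smallgain}, which your closed-ball induction handles cleanly.
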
 
\begin{proof}
Then the following relationship can be easily derived from the generalized definition of the saturation function:
\begin{align}
\begin{array}{rl} |\sat(w| \cl{W})-w| = \min & |w'|  \\
\mathrm{s.t.} & w + w' \in \cl{W}
\end{array}
\begin{array}{rl} \leq \min & |w'| \\
\mathrm{s.t.} & |w + w'| < \bar{\eta}
\end{array}
\begin{array}{rl} \leq \min_t & t|w| \\
\mathrm{s.t.} & (1-t)|w| < \bar{\eta}
\end{array}
\begin{array}{l} = \max\{0,|w|-\bar{\eta}\}\\
\phantom{0}
\end{array}
\end{align}
From the above inequality, it becomes clear that $|\sat(w| \cl{W})-w| < |w|$ for all $w \in \R^n$. Now, if $|A^{\bar{T}}| <1$ we can conclude that the residual operator $\b{\Delta}''$, 
\begin{align}
\Delta''_t(\hat{w}_{t:0}) := \left\{ \begin{array}{cl} A^{\bar{T}}(\hat{w}_{t-\bar{T}}-\sat(\hat{w}_{t-\bar{T}}|\cl{W})),&\forall t\geq \bar{T} \\ 0 &\text{else}  \end{array}\right.
\end{align}
satisfies $\Nxy{\b{\Delta}''(\b{a})}{p} \leq |A^{\bar{T}}| \|\b{a}\|_p < \|\b{a}\|_p$. On the other hand, for any $\gamma<\min\{1,|A^{\bar{T}}|\}$, it can be verified that the following local small-gain property holds:
\begin{align}
\|\b{a}\|_p < \frac{|A^{\bar{T}}|\bar{\eta}}{|A^{\bar{T}}|-\gamma} \quad \implies \quad \Nxy{\b{\Delta}'(\b{a})}{p} \leq \gamma \|\b{a}\|_p
\end{align}
 The desired local and global results follow by direct application of the small gain result \lemref{lem:smallgain}.
\end{proof}
Recall the following fact from linear algebra: 
\begin{lem}[Gelfand's Theorem]
Denote $\rho(A)$ as the spectral radius (max absolute value of eigenvalue) of $A \in \R^{n\times n}$, then for any matrix norm $|.|$ holds $\lim_{k \rightarrow \infty} (|A^k|)^{1/k} = \rho(A)$.
\end{lem}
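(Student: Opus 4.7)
The plan is to establish two matching inequalities, $\rho(A) \leq \liminf_{k\to\infty} |A^k|^{1/k}$ and $\limsup_{k\to\infty} |A^k|^{1/k} \leq \rho(A)$, from which existence of the limit and its value follow immediately. Throughout, $|\cdot|$ denotes a submultiplicative matrix norm; if the assumed norm is not submultiplicative a priori, equivalence of all norms on $\R^{n\times n}$ lets us pass to an induced operator norm at the cost of a multiplicative constant $C$, and $C^{1/k}\to 1$, so the result is unaffected.

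\textbf{Lower bound.} The first step is to recall that for any matrix $M\in\R^{n\times n}$ and any submultiplicative matrix norm, $\rho(M)\leq |M|$. This follows by picking an eigenpair $(\lambda,v)$ with $|\lambda|=\rho(M)$ and comparing $|\lambda|\,|vv^{\ast}|=|Mvv^{\ast}|\leq |M|\,|vv^{\ast}|$ (or more cleanly, by noting that $\rho$ equals the infimum of $|M|$ over all submultiplicative matrix norms, so any single such norm upper bounds it). Applying this to $M=A^k$ and using $\rho(A^k)=\rho(A)^k$ (since the eigenvalues of $A^k$ are exactly the $k$-th powers of those of $A$) yields
\[
\rho(A)^k = \rho(A^k) \leq |A^k| \quad\Longrightarrow\quad \rho(A) \leq |A^k|^{1/k}
\]
for every $k\geq 1$, which gives $\rho(A)\leq \liminf_{k\to\infty} |A^k|^{1/k}$.

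\textbf{Upper bound.} For any $\varepsilon>0$, define $B:=A/(\rho(A)+\varepsilon)$, so that $\rho(B)<1$. The key step is to show $B^k\to 0$ entrywise (equivalently $|B^k|\to 0$, by norm equivalence). This I would do via the Jordan canonical form $B=PJP^{-1}$: for a single Jordan block of size $s$ with eigenvalue $\mu$ satisfying $|\mu|<1$, the entries of $J_{s}^k$ are polynomials in $k$ of degree at most $s-1$ multiplied by $\mu^{k-j}$, and each such term decays to zero since $k^{s-1}|\mu|^{k}\to 0$. Summing over blocks and conjugating by $P$, $|B^k|\leq |P|\,|J^k|\,|P^{-1}|\to 0$. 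Hence there exists $K_\varepsilon$ with $|B^k|\leq 1$ for all $k\geq K_\varepsilon$, i.e., $|A^k|\leq (\rho(A)+\varepsilon)^k$, giving $|A^k|^{1/k}\leq \rho(A)+\varepsilon$ for all $k\geq K_\varepsilon$. Letting $\varepsilon\downarrow 0$ yields $\limsup_{k\to\infty}|A^k|^{1/k}\leq \rho(A)$.

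\textbf{Conclusion and main obstacle.} Combining the two bounds forces $\liminf=\limsup=\rho(A)$, so the limit exists and equals $\rho(A)$. The conceptually heaviest step is the decay statement $\rho(B)<1 \Rightarrow B^k\to 0$, which is where the Jordan form (or, alternatively, a Schur triangularization argument with an $\varepsilon$-weighted basis that makes $B$ contractive in a suitable induced norm) carries the real content. Everything else is a clean repackaging of submultiplicativity and spectral mapping. A minor technical point worth flagging is that the paper's norm $|\cdot|$ was only declared to be some norm on $\R^n$ with an induced matrix norm; since induced norms are automatically submultiplicative, the argument above applies verbatim, and the general case reduces to this by the equivalence-of-norms remark at the top.
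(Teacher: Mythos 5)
Your proof is correct, but note that the paper does not actually prove this lemma: it is stated as a recalled fact from linear algebra (``Recall the following fact\dots''), used only to justify the existence of $\bar{T}$ with $|A^{\bar T}|<1$ when $\rho(A)<1$, so there is no in-paper argument to compare against. What you give is the standard self-contained proof of Gelfand's formula in finite dimensions, and both halves are sound: the lower bound via $\rho(A)^k=\rho(A^k)\leq|A^k|$, and the upper bound via scaling $B=A/(\rho(A)+\varepsilon)$ and the Jordan-form estimate $k^{s-1}|\mu|^k\to 0$ showing $B^k\to 0$. One small point worth being explicit about: a real matrix may have complex eigenvalues, so both the eigenpair comparison in the lower bound and the Jordan decomposition in the upper bound naturally live over $\C$, while the paper's norm is on $\R^{n\times n}$. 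This is not a gap --- extending $|\cdot|$ to $\C^{n\times n}$ (or passing to any complex induced norm) changes things only by a fixed constant $C$, and $C^{1/k}\to 1$, exactly as in your norm-equivalence remark at the top --- but since you invoked that remark only for submultiplicativity, it would be cleaner to say it also covers the complexification. Your alternative suggestion (Schur triangularization with an $\varepsilon$-weighted diagonal conjugation making $B$ contractive in a tailored induced norm) is equally valid and avoids Jordan blocks entirely, which some readers prefer since the Schur form is numerically and conceptually lighter.
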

Due to Gelfand's theorem, existence of $T$ such that $|A^T|<1$ is guaranteed if $A$ is schur, i.e. $\rho(A)<1$. This is true regardless of which particular induced-norm $|\cdot|$ we choose. Notice that \eqref{eq:awp} also gives local stability even if $A$ is unstable.

As a corollary of the above result, we can also conclude that there exists a time $t'$ for which $x_t$ is guaranteed to stay in $\cl{X}$ for all time $t>t'$ despite any $\ell_p$ perturbations (with $p<\infty$):

\begin{coro}
Assume $\bar{\eta}>0$ with the definition used in \lemref{eq:awp}. Now, if $\b{\hat{w}}\in \ell^n_p$, then there exists a time $t'$ such that for all $t>t'$: $x_t\in\cl{X}$.
\end{coro}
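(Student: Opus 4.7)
The plan is to exploit the identity $\b{G} + \b{G}' = \b{I}$ to rewrite $\CLM{x}$ in a form that separates a ``safe'' contribution from a vanishing ``transient'' contribution. Starting from $\CLM{x} = \b{I} + (\b{R}-\b{I})\b{G} + (\b{R}'-\b{I})\b{G}'$ and using $\b{I} - \b{G} - \b{G}' = 0$, the state trajectory factors as
\[
\b{x} \;=\; \CLM{x}(\b{\hat{w}}) \;=\; \b{R}\,\b{G}(\b{\hat{w}}) \;+\; \b{R}'\,\b{G}'(\b{\hat{w}}).
\]
I would then handle the two terms separately: the first stays in $\cl{X}$ at every time step thanks to the set-invariance property \eqref{eq:invarsls}, while the second becomes identically zero after a finite transient.

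The key observation for the transient term is that $\b{\hat{w}} \in \ell^n_p$ with $p < \infty$ forces $|\hat{w}_t| \to 0$, so there exists $t^{*}$ such that $|\hat{w}_t| < \bar{\eta}$ for all $t \geq t^{*}$. By the definition of $\bar{\eta}$ this gives $\hat{w}_t \in \cl{B}_{\bar{\eta}} \subset \cl{W}$, hence $\sat(\hat{w}_t \mid \cl{W}) = \hat{w}_t$ and
\[
\b{G}'(\b{\hat{w}})_t \;=\; \hat{w}_t - \sat(\hat{w}_t \mid \cl{W}) \;=\; 0, \qquad t \geq t^{*}.
\]
Since $\b{R}'$ has finite memory $\bar{T}$ (the matrices $R'_{t,k}$ vanish for $k > \bar{T}+1$), the convolution $[\b{R}'\b{G}'(\b{\hat{w}})]_t$ only involves entries $\b{G}'(\b{\hat{w}})_{t+1-k}$ with $k \leq \bar{T}+1$, and is therefore identically zero once $t \geq t^{*} + \bar{T}$.

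For the first term, the projection property ensures $\b{G}(\b{\hat{w}})_t = \sat(\hat{w}_t \mid \cl{W}) \in \cl{W}$ for every $t$, regardless of whether $\hat{w}_t$ is small. Thus $\b{G}(\b{\hat{w}})$ is globally $\cl{W}$-valued, and applying the invariance property \eqref{eq:invarsls} of the linear CLM $\b{R}$ gives $[\b{R}\b{G}(\b{\hat{w}})]_t \in \cl{X}$ for all $t$. Combining both parts yields $x_t = [\b{R}\b{G}(\b{\hat{w}})]_t + [\b{R}'\b{G}'(\b{\hat{w}})]_t \in \cl{X}$ for every $t \geq t' := t^{*} + \bar{T}$, which is the claim. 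The only nontrivial piece is recognizing that $\ell^n_p$-summability (with $p<\infty$) gives pointwise decay of $\b{\hat{w}}$ into the ball $\cl{B}_{\bar{\eta}}$; after that, the proof is a bookkeeping exercise combining the finite memory of $\b{R}'$ with the invariance property of $\b{R}$.
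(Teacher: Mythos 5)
Your proof is correct and follows essentially the same route as the paper's own: both decompose $\b{x} = \CLM{x}(\b{\hat{w}})$ into the safe part $\b{R}(\sat(\b{\hat{w}}|\cl{W}))$, which stays in $\cl{X}$ by \eqref{eq:invarsls}, and the transient part $\b{R}'(\b{\hat{w}}-\sat(\b{\hat{w}}|\cl{W}))$, which vanishes after finite time because $\hat{w}_t \rightarrow 0$ forces $\hat{w}_t$ into $\cl{B}_{\bar{\eta}} \subset \cl{W}$ and $\b{R}'$ has finite memory $\bar{T}$. Your explicit remark that $p<\infty$ is what yields the pointwise decay is a detail the paper leaves implicit, but otherwise the two arguments coincide step for step.
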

\begin{proof}
Recall the relationship $\b{\hat{w}} = (\CLM{x})^{-1}\b{x} \Leftrightarrow \CLM{x}\b{\hat{w}} = \b{x}$. Plugging in \eqref{eq:NLblend-sat-0} and performing some simplification allows us to decompose $\b{x}$ into the terms $\b{s}$ and $\b{s}'$:
\begin{align}
\b{x} = \underbrace{\b{R}(\sat(\b{\hat{w}}|\cl{W}))}_{=:\b{s}} + \underbrace{\b{R}'(\b{\hat{w}}-\sat(\b{\hat{w}}|\cl{W}))}_{=:\b{s}'} 
\end{align}
Now notice that since $\sat(\b{\hat{w}}|\cl{W}) \in \cl{W}$ per definition and the design of $\b{R}$, we know that $\b{s} \in \cl{X}$. Furthermore, $\b{\hat{w}}\in \ell^n_p$ implies that $\hat{w}_t \rightarrow 0$ and since $\bar{\eta}>0$, there exists a time $\bar{t}$ such that for all $t>\bar{t}$ holds $\hat{w}_t- \sat(\hat{w}_t|\cl{W}) = 0$. Now, recall that $R'_{t,k} = 0$ for all $k > \bar{T}+1$. It immediately follows that for all $t> \bar{t}+\bar{T}+1$, $s'_t$ is zero. Thus, for all $t> \bar{t}+\bar{T}+1$ holds $x_t = s_t \in \cl{X}$.
\end{proof}

\end{document}